\DeclareMathOperator{\Tr}{Tr}
\newcommand{\ra}{\rangle}
\newcommand{\la}{\langle}
\newcommand{\vc}{\vec c}
\renewcommand{\oc}{\vec{c}^{\,\star}}
\newcommand{\calE}{{\cal E }}
\newcommand{\calO}{{\cal O }}
\newcommand{\one}{\mathds{1}}
\newcommand{\mpoM}{M^{(\ell_0)}}
\newcommand{\mpoL}{L^{(\ell_0)}}
\newcommand{\mpoMerr}{\calE_M^{(\ell_0)}}
\newcommand{\mpoLerr}{\calE_L^{(\ell_0)}}
\newcommand{\mpoC}{\vc^{\,(\ell_0)}}
\newtheorem{lemma}[section]{Lemma}
\newcommand{\beq}{\begin{equation}}
\newcommand{\eeq}{\end{equation}}
\begin{document}

\title{Tensor Network enhanced Dynamic Multiproduct Formulas}

\author{Niall F. Robertson}
\thanks{co-first author}
\affiliation{IBM Quantum, IBM Research Europe - Dublin, IBM Technology Campus, Dublin 15, Ireland}

\author{Bibek Pokharel}
\thanks{co-first author}
\affiliation{IBM Quantum, IBM Research -- Almaden, San Jose CA, 95120, USA}

\author{Bryce Fuller}
\affiliation{IBM Quantum, IBM Thomas J Watson Research Center, Yorktown Heights, NY 10598, USA}

\author{Eric Switzer}
\affiliation{Donostia International Physics Center (DIPC), 20018 Donostia-San Sebastian, Euskadi, Spain}
\affiliation{Department of Physics, University of Central Florida, Orlando, Florida 32816, USA}

\author{Oles Shtanko}
\affiliation{IBM Quantum, IBM Research -- Almaden, San Jose CA, 95120, USA}

\author{Mirko Amico}
\affiliation{IBM Quantum, IBM Thomas J Watson Research Center, Yorktown Heights, NY 10598, USA}

\author{Adam Byrne}
\affiliation{IBM Quantum, IBM Research Europe - Dublin, IBM Technology Campus, Dublin 15, Ireland}

\author{Andrea D'Urbano}
\affiliation{IBM Quantum, IBM Research Europe - Dublin, IBM Technology Campus, Dublin 15, Ireland}

\author{Salome Hayes-Shuptar}
\affiliation{IBM Quantum, IBM Research Europe - Dublin, IBM Technology Campus, Dublin 15, Ireland}

\author{Albert Akhriev}
\affiliation{IBM Quantum, IBM Research Europe - Dublin, IBM Technology Campus, Dublin 15, Ireland}

\author{Nathan Keenan}
\affiliation{IBM Quantum, IBM Research Europe - Dublin, IBM Technology Campus, Dublin 15, Ireland}

\author{Sergey Bravyi}
\affiliation{IBM Quantum, IBM Thomas J Watson Research Center, Yorktown Heights, NY 10598, USA}

\author{Sergiy Zhuk}
\affiliation{IBM Quantum, IBM Research Europe - Dublin, IBM Technology Campus, Dublin 15, Ireland}

\begin{abstract}
Tensor networks and quantum computation are two of the most powerful tools for the simulation of quantum many-body systems. Rather than viewing them as competing approaches, here we consider how these two methods can work in tandem. We introduce a novel algorithm that combines tensor networks and quantum computation to produce results that are more accurate than what could be achieved by either method used in isolation. Our algorithm is based on multiproduct formulas (MPF) - a technique that linearly combines Trotter product formulas to reduce algorithmic error. Our algorithm uses a quantum computer to calculate the expectation values and tensor networks to calculate the coefficients used in the linear combination. We present a detailed error analysis of the algorithm and demonstrate the full workflow on a one-dimensional quantum simulation problem on $50$ qubits using two IBM quantum computers: \texttt{ibm\char`_torino} and \texttt{ibm\char`_kyiv}.
\end{abstract}

\maketitle

\section{Introduction}

Understanding out-of-equilibrium properties of quantum systems relevant for chemistry, material science, and high-energy physics often requires the simulation of Hamiltonian dynamics. For example, correlation functions 
describing unitary time evolution  of  interacting quantum spins or electrons  provide information about the excitation spectrum and aid identification  of exotic quasiparticles such as unpaired Majorana fermions in one-dimensional~\cite{harle2022observing} and two-dimensional~\cite{knolle2014dynamics} models. 
Correlation functions combining forward and backward time evolution known as out-of-time-order correlators are commonly used for diagnosing quantum chaos in many-body systems~\cite{maldacena2016bound} with applications ranging from superconductivity~\cite{larkin1969quasiclassical} to black hole physics~\cite{shenker2014multiple,Kitaev2014talk}.

The ability of conventional classical computers to simulate Hamiltonian dynamics is  limited by the exponential cost
of representing  entangled quantum states.
Consider the simplest version of the problem -
simulating dynamics  of a one-dimensional spin chain with short-range interactions starting from an initial unentangled state.
The entanglement entropy between the left and the right halves of the chain 
typically grows linearly with the evolution time $t$ until it saturates at the value proportional to the system size $n$,
see for instance~\cite{calabrese2005evolution}.
Most classical simulation methods  rely on Matrix Product States (MPS) to approximate
the time-evolved states. Since the bond dimension of MPS grows exponentially with the amount of entanglement,
the simulation cost grows exponentially with $n$ or $t$.
This limits applicability of classical simulators to small systems or short evolution times.

In contrast, quantum computers can efficiently simulate Hamiltonian dynamics for most practically relevant Hamiltonians~\cite{lloyd1996universal}, at least in theory. 
For example, quantum algorithms based on high-order Trotter formulas~\cite{childs2021theory} or the Lieb-Robinson bound~\cite{haah2021quantum}
can simulate dynamics of spin chain Hamiltonians with 
gate complexity scaling almost linearly with space-time volume $nt$.
It is expected that simulation problems of this type with space-time volume $nt\approx 10^4$ are already intractable for existing classical computers~\cite{childs2018toward}.

It should be emphasized that even quantum computers cannot simulate Hamiltonian dynamics exactly.
To begin with, the existing quantum processors are not fault-tolerant and the accumulation of errors limits the depth of quantum
circuits that can be executed reliably. Even leaving the fault-tolerance problem aside, a quantum computer can only
approximate the exact time evolution unitary due to algorithmic errors known as Trotter errors~\cite{childs2021theory}.
Although the asymptotic runtime of best known quantum algorithms scales only logarithmically with the desired error tolerance~\cite{low2017optimal},
the cost of high-precision simulations is prohibitive for near-term quantum processors.
For example, the benchmark problem
of Ref.~\cite{childs2018toward}
with space-time volume $nt\approx 10^4$ would require nearly $10^7$ CNOT gates to approximate the time evolution
within three digits of precision~\cite{childs2019faster,bravyi2022future}.
This exceeds the size of quantum circuits demonstrated to date by many orders of magnitude. 

A natural question is whether classical and quantum simulation algorithms 
working in tandem  can accomplish Hamiltonian dynamics simulation at a lower cost compared with the
classical or quantum algorithms alone.
Here we begin addressing this question by showing how to combine classical tensor network algorithms for simulating weakly entangling
quantum circuits~\cite{vidal2003efficient} with the quantum simulation algorithms based on Multi Product Formulas~\cite{childs2012hamiltonian,low2019well,vazquez2023well,rendon2022improved,zhuk2023trotter}. 

To illustrate the key ideas of our approach, let us begin with a simpler problem: estimating the Trotter error.
Suppose $H$ is a Hamiltonian describing a spin chain with short-range interactions and $|\psi_0\ra$
is an initial weakly entangled state. Evolving the spin chain 
for a time $t$
results in the state $e^{-itH}|\psi_0\ra$. A Trotter circuit approximating $e^{-itH}$ has the form
$U(t)=S(t/k)^k$, where $k$ is the number of Trotter steps
and $S(\tau)$ is a shallow quantum circuit  approximating $e^{-i\tau H}$ for 
a small evolution time $\tau$. We choose $S(\tau)$ as the second-order Trotter-Suzuki formula~\cite{childs2021theory}.
The fidelity between the exact time-evolved state and its Trotter approximation
is controlled by the overlap $L(t) = \langle \psi_0|U(t)^\dag e^{-itH}|\psi_0\rangle$. 
As detailed below,
 one can view the operator $U(t)^\dag e^{-itH}$ as a combination of a forward
time evolution generated by $H$ and a backward time evolution generated by a time-dependent Hamiltonian
associated with $U(t)$. In the regime when the Trotter error is small, the forward and the backward time evolutions
nearly cancel each other. Thus one might expect that the unitary $U(t)^\dag e^{-itH}$ is weakly entangling, even though
the time-evolved state $e^{-itH}|\psi_0\ra$ may have a lot of entanglement. 
If this is the case, one can use a generalization of MPS algorithms based on Matrix Product Operators (MPO)
to approximate the unitary $U(t)^\dag e^{-itH}$ on a classical computer~\cite{schollwock2011density}. 
Computing the overlap $L(t)$ amounts to contracting the MPO representation of  $U(t)^\dag e^{-itH}$ with an MPS representation
of  $|\psi_0\ra$. 

Once the Trotter error can be reliably estimated, the next step is to reduce this error.
Our approach is based on dynamic Multi Product Formulas (MPF) 
proposed in~\cite{zhuk2023trotter}. Instead of approximating the Hamiltonian dynamics
by a single Trotter circuit, an MPF combines several Trotter circuits $U_a(t)=S(t/k_a)^{k_a}$
with a varying number of Trotter steps $k_a$. An MPF with $r$ terms 
is defined as a linear combination of density matrices associated with the states $U_a(t)|\psi_0\ra$
for $a=1,\ldots,r$.
Coefficients in this linear combination serve as variational parameters that we optimize to reduce
the Trotter error. Specifically, we minimize the Frobenius norm distance between the
density matrix of the
exact time-evolved state and the approximating MPF (the latter may or may not be a valid density matrix).
This distance is shown to be a simple function 
of the MPF coefficients as well as overlaps $\la \psi_0|U_a(t)^\dag e^{-itH}|\psi_0\ra$ 
and $\la \psi_0|U_a^\dag(t) U_b(t)|\psi_0\ra$. We show how to compute these overlaps using the MPO
algorithm sketched above. Minimizing the distance amounts to solving a simple least-squares problem.
Crucially, a quantum processor only needs to prepare and measure
 individual states  $U_a(t)|\psi_0\ra$ while different terms in the MPF are combined by a classical
 post-processing of the measured results.

\begin{figure*}
\centering
\includegraphics[width=\textwidth]{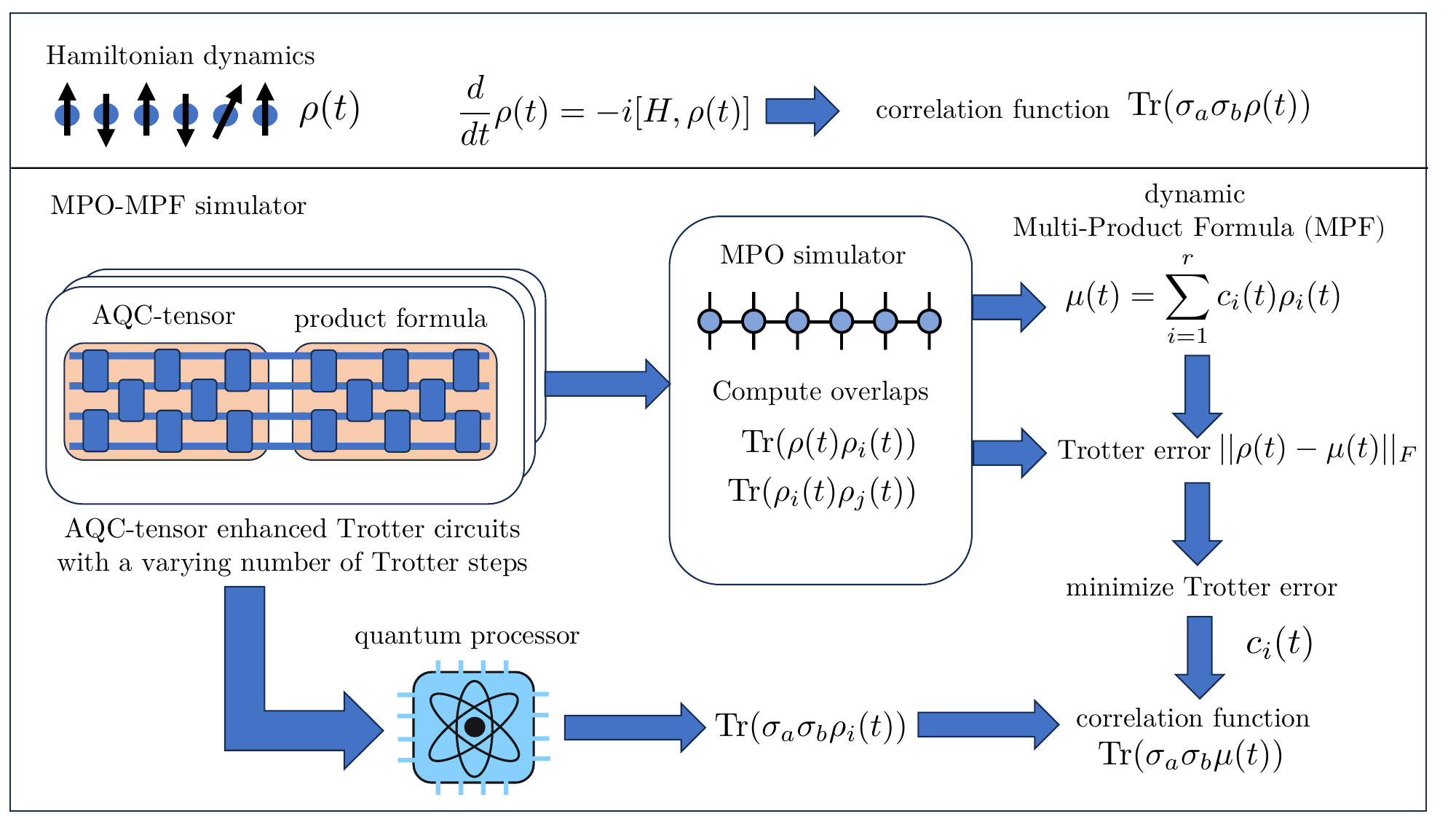}
\caption{Top panel: Hamiltonian dynamics for a spin chain Hamiltonian $H$. The goal is to compute time-dependent correlation functions such as the expected value of Pauli $\sigma_a \sigma_b$
on the time-evolved state $\rho(t)$. Bottom panel: simulation workflow.
The time-evolved state $\rho(t)$ is approximated by a linear combination of states $\rho_i(t)$ generated by 
an ensemble of Trotter circuits with a varying number of Trotter steps. The initial segment of each Trotter circuit
is compressed using the $AQCtensor$ algorithm of Ref.~\cite{robertson2023approximate}.
A classical simulator based on Matrix Product Operators (MPO) is responsible for computing overlaps for the chosen ensemble of states.
Our approximation for $\rho(t)$ is a dynamic Multi Product Formula (MPF)  $\mu(t)=\sum_{i=1}^r c_i(t) \rho_i(t)$
with coefficients $c_i(t)$ optimized to minimize the Trotter error measured by the Frobenius norm distance
between $\rho(t)$ and $\mu(t)$.  This distance is  a simple function of the overlaps
$\mathrm{Tr}(\rho(t) \rho_i(t))$ and $\mathrm{Tr}(\rho_i(t) \rho_j(t))$. 
A quantum processor is responsible for
estimating correlation functions associated with each individual
state $\rho_i(t)$. Finally, different terms in the MPF are combined 
to obtain an estimate $\mathrm{Tr}(\rho(t) \sigma_a \sigma_b) \approx  \sum_{i=1}^r c_i(t) \mathrm{Tr}(\rho_i(t) \sigma_a \sigma_b)$.
The classical cost of computing overlaps such as $\mathrm{Tr}(\rho(t) \rho_i(t))$
is small due to an approximate  cancellation between forward and backward time evolution
generated by $H$ and by the Trotter circuit.
}\label{fig:workflow}
\end{figure*}

The final ingredient of our simulation algorithm is $AQCtensor$ — an approximate quantum compiler based on tensor networks proposed in~\cite{robertson2023approximate} which aims to find a short depth quantum ciruit that closely approximates a given Matrix Product State. In the workflow proposed here, $AQCtensor$ is used to prepare a state $\ket{\psi_{t_1}}=e^{-iHt_1}\ket{\psi_0}$ where $\ket{\psi_0}$ is some initial product state and where $t_1$ is the largest time at which the time-evolved quantum state can be stored classically as an MPS with a given desired accuracy and fixed amount of memory. Specifically, the $AQCtensor$ algorithm optimizes a parameterized circuit $V(\Theta)$ such that $V(\Theta)\ket{\psi_0}\approx e^{-iHt_1}\ket{\psi_0}$. Additional Trotter steps are then appended to this parametric circuits to simulate the evolution for a time given by $t_2$, so that the total time simulated by the full $AQCtensor$ + Trotter circuit is $t=t_1+t_2$. The full simulation workflow combining $AQCtensor$ and our \emph{MPO}-based dynamic multiproduct algorithm is illustrated in Figure~\ref{fig:workflow}.

To put our results in a broader context, let us discuss previously known simulation algorithms combining quantum and classical workloads.
Hybrid tensor network  studied in~\cite{yuan2021quantum,haghshenas2022variational,eddins2022doubling,schuhmacher2024hybrid} aim to reduce the cost of storing high-rank tensors in a classical memory.
This is achieved by augmenting a classical efficiently contractable tensor network such as MPS with quantum tensors
whose components are defined as amplitudes of easy-to-prepare quantum states. 
It is envisioned that hybrid tensor networks are applicable to both static (ground state) and dynamic simulation problems~\cite{yuan2021quantum}.
Our MPO-MPF simulator is closely related to  the tensor network error mitigation method proposed in~\cite{filippov2023scalable}. The latter aims to reduce
errors due to qubit decoherence  and imperfect control, as opposed to algorithmic errors considered in the present work. 
In particular, our MPO algorithm for estimating Trotter error is similar to the middle-out contraction method used in~\cite{filippov2023scalable}.

This work is structured as follows: in section \ref{sec:background} we recall the basic definitions used in previous work on multiproduct formulas and $AQCtensor$. In section \ref{sec:mpo-mpf} we introduce our new MPO-based dynamic MPF algorithm. We discuss the errors involved and show in Figure \ref{bootstrap} how our \emph{MPO-MPF algorithm} provides an advantage over purely classical or purely quantum approaches in isolation - we leave a more detailed error analysis to Appendix \ref{sec:error-analysis}. In section \ref{sec:results} we present the results of both our classical and quantum simulations - see Figures \ref{fig:heron_expec} and \ref{fig:AQC_observables}. We also introduce a classically implementable numerical test to determine the simulation times at which any given multiproduct formula will provide an advantage over a single product formula - see the inequality in (\ref{r_plus_1_test}). We stress that this numerical test applies both to the dynamic MPF algorithm proposed here and the previously proposed static MPF algorithms \cite{childs2012hamiltonian, low2019well, vazquez2023well}. We conclude in section \ref{sec:discussion}.

\section{Background}\label{sec:background}
\subsection{Multiproduct Formulas}
In what follows we briefly recall from~\cite{zhuk2023trotter} the notions of Trotter product and multi-product formulas for quantum time evolution. A product formula is a quantum circuit $S(t)$ approximating
the evolution operator $e^{-itH}$ of a quantum system with a Hamiltonian $H$: a common approach is to split the time window $(0,t)$ into $k$ intervals of length $t/k$, construct a circuit $S(t/k)$ and apply it $k$
times. The depth of the resulting circuit $S(t/k)^k$ scales linearly with $k$, and $k$ depends on the structure of $H$, desired approximation error, and the type of product formula~\cite{zhuk2023trotter}. 

We stress that Trotter product formulas that accurately approximate Hamiltonian dynamics might be very deep. Multi-product Formulas (MPF)~\cite{childs2012hamiltonian} allow for the reduction of the depth of a circuit without increasing its approximation error. The latter point can be illustrated by the following example: for a given sequence of integers $k_1,\ldots,k_r$, consider a \emph{static MPF} in the form of a real linear combination $\mu^S(t)=\sum_{j=1}^r c_j \rho_{k_j}(t)$ of $r$ density matrices $\rho_{k_j}(t)=S(t/{k_j})^{k_j}\rho_{0}S(t/{k_j})^{-k_j}$ each of which approximates the exact time-evolved density matrix $\rho(t)=e^{-itH} \rho_{0} e^{itH}$ where $\rho_{0} =  |\psi_{0}\rangle\langle \psi_{0}|$ and $S$ is a Trotter product formula of order $p$. The coefficients $c_j$ are chosen to solve the following linear system:
\begin{equation}\label{eq:MPFlinsys}
    \sum\limits_{i=1}^r c_i = 1 \ \text{and} \sum\limits_{i=1}^r \frac{c_i}{k_i^q} = 0
  \end{equation}
  for $q \in \{p, p+1, ..., 2p-1\}$. This choice of coefficients guarantees that the resulting MPF $\mu^S$ has at least quadratically (or even exponentially) better accuracy (measured in $1$-norm, $\|\rho-\mu^S\|_1$) than each $\rho_{k_j}$ without using circuits that are any deeper than those used to produce $\rho_{k_j}$ \cite{zhuk2023trotter}. As a result, the expected value
of any observable $\Tr(\calO \rho(t))$ can be approximated better by a linear combination $\sum_{j=1}^r c_j  \Tr(\calO \rho_{k_j}(t))$.

By design, the static MPF $\mu^S(t)$ reduces the Trotter error: $\|\rho-\mu^S\|_1< \|\rho-\rho_{k_j}\|_1$. However this error is not minimized: i.e. $\min_{c_i}\|\rho-\sum_{j=1}^r c_j \rho_{k_j}\|_1<\|\rho-\mu^S\|_1$. In addition, the coefficients $c_i$ found from~\eqref{eq:MPFlinsys} are sensitive to the choice of $k_i$ as for certain $k_i$ the set of linear equations in~\eqref{eq:MPFlinsys} is ill-conditioned numerically. In this scenario, the resulting vector of coefficients has a large condition number $\sum_i |c_i|$ which leads to an amplification of the sampling noise for expectation values evaluated on a quantum computer. These two points were addressed in~\cite{zhuk2023trotter} by introducing dynamic multiproduct formulas: $\mu^D(t)=\sum_{j=1}^r c_j(t) \rho_{k_j}(t)$
with time-dependent coefficients $c_j(t)$ chosen to minimize the error $\|\mu^D(t)-\rho(t)\|_F$ measured in Frobenius norm. It can be computed by solving a convex optimization problem in which a cost function $E_F^D$ is minimized with respect to coefficients $c_j(t)$:
\begin{align}
\label{proj_error}
E_F^D = \| \rho(t)-\mu^D(t)\|_F^2 &=  1 + \sum_{i,j=1}^r M_{ij}(t) c_i(t) c_j(t) \nonumber \\
&- 2\sum_{i=1}^r L_i(t) c_i(t)
\end{align}
where $M(t)$ is the Gram matrix with elements 
\begin{align}\label{eq:Mt}
M_{ij}(t) & = \Tr(\rho_{k_i}(t) \rho_{k_j}(t)) \nonumber \\
& = \left| \langle \psi_{0} | S(t/k_i)^{-k_i} S(t/k_j)^{k_j}|\psi_{0}\rangle\right|^2
\end{align}
and $L(t)$ is a vector of overlaps with components 
\begin{equation}\label{eq:Lexact}
L_j(t) = \Tr(\rho(t) \rho_{k_j}(t)) =  \left| \langle \psi_{0} | S(t/k_j)^{-k_j} e^{-i H t}|\psi_{0}\rangle\right|^2
\end{equation}
We discuss in detail the relationship between the $1$-norm and the Frobenius norm in section \ref{sec:results}. In particular, we discuss the conditions under which dynamic MPF outperforms each Trotter product formula as measured in $1$-norm error - see equation (\ref{r_plus_1_test}) and surrounding text. It was noted in~\cite[Section 5.a)]{zhuk2023trotter} that dynamic MPF outperforms the static one at long simulation times, and it does not require that the $k_i$ be constrained in order to get well-conditioned coefficients. 

\subsection{$AQCtensor$ algorithm}\label{sec:aqctensor}
The $AQCtensor$ algorithm was proposed in \cite{robertson2023approximate} and we briefly recall it here. The algorithm aims to find a short depth quantum circuit approximating a given Matrix Product State. It is based on classical optimization; one starts with a target quantum state $\ket{\psi_t}$ and a variational circuit Ansatz $V(\Theta)\ket{\psi_0}$ with $\ket{\psi_0}$ some initial product state. Both the target state and Ansatz state are stored as Matrix Product States. The cost function:
\begin{equation}\label{eq:cost-func}
    C(\Theta) = 1 - \left|\bra{\psi_0}V^{\dagger}(\Theta)\ket{\psi_t}\right|^2 
\end{equation}
is evaluated by computing the overlaps of the MPS $\ket{\psi_t}$ and $V(\Theta)\ket{\psi_0}$ and is minimized with respect to rotation angles $\Theta$ resulting in a short depth circuit $V(\Theta)\ket{\psi_0}$ representing the target state. In this work, we employ $AQCtensor$ to get a short-depth representation of the target state $\ket{\psi_{t_1}}=e^{-iHt_1}\ket{\psi_0}$. The idea is to generate $\ket{\psi_{t_1}}$ using a classical algorithm such as TEBD for the latest possible time $t_1$ that can be stored for a given maximum bond dimension $\chi_{max}$ and for a given desired precision determined by the bond dimension truncation threshold which we denote by $\lambda_0$. Instead of using standard Trotterization to prepare the state $\ket{\psi_{t_1}}$ (which may require a deep circuit), one uses the optimized angles $\Theta_{opt}$ (obtained by minimizing the cost function) to prepare $V(\Theta_{opt})\ket{\psi_0}$. Finally, one applies a Trotter circuit to this state to obtain a time-evolved state beyond the time at which it could be stored classically: $S(\frac{t_2}{k})^kV(\Theta_{opt})\ket{\psi_0}$ such that the total evolution time simulated by the full circuit is $t=t_1+t_2$. As discussed in \cite{robertson2023approximate}, we use a circuit Ansatz $V(\Theta)$ which has the same CNOT structure as a second-order Trotter circuit allowing for the use of a ``smart-initialization" scheme where the initial values $\Theta_0$ are set such that $V(\Theta_0)$ corresponds exactly to a second-order Trotter circuit. Therefore, after one step of a minimization algorithm such as gradient descent it is assured that the optimized circuit Ansatz will more closely represent the exact time-evolved circuit than a standard Trotter circuit.

\section{MPO-based Dynamic Multiproduct Formulas}\label{sec:mpo-mpf}
Here we present a Matrix Product Operators (MPO) based method to calculate the dynamic coefficients $c_j(t)$ classically, leaving the calculation of the expectation values $\Tr(\mathcal{O}\rho_{k_j})$ to the quantum computer. We argue that, in the context of dynamic MPF, the bond dimension of MPO-based method scales (in time) more favourably than that of Matrix Product States (MPS), and we also show that the error (in computing expectation values) due to MPO bond dimension truncation can be made negligible compared to the error of the individual Trotter product formulas used in the multiproduct. 

Recall from~\eqref{eq:Mt}-\eqref{eq:Lexact} that $M_{ij}$ is the overlap of the quantum states $\ket{\psi_j}=S(t/k_j)^{k_j}\ket{\psi_0}$ and $\ket{\psi_i}=S(t/k_i)^{k_i}\ket{\psi_0}$ and $L_j$ is the overlap of the quantum states $\ket{\psi_j}$ and $\ket{\psi_{ex}}=e^{-i H t}\ket{\psi_0}$. For large times and large systems it is not classically efficient to store $\ket{\psi_i}, \ket{\psi_j}, \ket{\psi_{ex}}$ as Matrix Product States, even for 1D systems. To calculate $M_{ij}$ and $L_j$ and hence $c_{j}(t)$, we must therefore avoid explicitly storing the quantum states. To do so, we define the objects $F_{ij}$ and $F_{ex,j}$:
\begin{equation}\label{eq:Fdefs}
\begin{aligned}
    F_{ij} &\equiv S\left(\frac{t}{k_i}\right)^{-k_i}S\left(\frac{t}{k_j}\right)^{k_j}\\
    F_{\rm ex, j} & \equiv e^{iHt}S\left(\frac{t}{k_j}\right)^{k_j}
\end{aligned}
\end{equation}
The quantities $M_{ij}$ and $L_j$ are thus given by:
\begin{equation}\label{mij}
   M_{ij} =  |\bra{\psi_0}F_{ij}\ket{\psi_0}|^2
\end{equation}
and
\begin{equation}\label{lj}
   L_{j} =  |\bra{\psi_0}F_{\rm ex, j}\ket{\psi_0}|^2
\end{equation}
In practice, we approximate $e^{-iHt} \approx S\left(\frac{t}{k_0}\right)^{k_0}$ with $k_0 \gg k_i$. In what follows, we thus consider $F_{\rm ex, j}$ to be a special case of $F_{ij}$ with $i=0$. In Algorithm \ref{alg:MPO} we present our algorithm to calculate $F_{ij}$.
\begin{algorithm}[H]
    \caption{MPO-based algorithm to calculate $M_{ij}$ and $L_j$}\label{alg:MPO}
    \begin{algorithmic}
        \State $F_{ij} \gets \mathds{1}$
        \While {evolved$\_$time$\_$i $<$ t and evolved$\_$time$\_$j $<$ t}
            \If{evolved$\_$time$\_$j $\leq$ evolved$\_$time$\_$i}
                \State $F_{ij} \gets F_{ij} S\left(\frac{t}{k_j}\right)$
            \Else
                \State $F_{ij} \gets S^{\dagger}\left(\frac{t}{k_i}\right)F_{ij}$
            \EndIf
        \EndWhile
    \end{algorithmic}
\end{algorithm}

\paragraph{Memory efficiency.} We now briefly comment on the memory requirements in Algorithm \ref{alg:MPO} while a more detailed analysis can be found in appendix~\ref{sec:error-analysis}. More specifically, we compare the bond dimension of the MPO required to store $F_{ij}$ with a given precision vs the bond dimension of the MPS required to store the time evolved state $e^{-iHt}\ket{\psi_0}$. For a general Hamiltonian (i.e. one that has not been handpicked to exhibit behaviour such as many-body localization), the bond dimension required to store $e^{-iHt}\ket{\psi_0}$ scales exponentially in time, i.e. $\log(\chi_{mps}) \propto T$.  However, unlike the MPS for the quantum state, the bond dimension required to store $F_{ij}$ and $F_{ex,j}$ as Matrix Product Operators decreases with decreasing Trotter step $dt$. In appendix \ref{sec:error-analysis}, we argue that $\log(\chi_{mpo}) \propto Tdt^2$ in the worst case scenario. For any given simulation time $t$, we thus argue that there exists a Trotter step $dt$ such that it is more efficient to store the MPOs $F_{ij}$ than to store the MPS for the quantum states individually.

\paragraph{Error analysis.} Now consider the effect of truncating the bond dimension of the MPOs (for computing overlaps $M_{ij}$ and $L_j$) on the accuracy of estimating expectation values. In Appendix \ref{sec:error-analysis}, we argue that for any given truncation threshold $\lambda_0$ on the bond dimension, there exists a value of the Trotter step number $k$ such that: 
\begin{itemize}
    \item [1)] the memory resources required to store $F_{ij}$ are lower than those required to store the quantum state $e^{-iHt}\ket{\psi_0}$ using MPS,
    \item[2)] for an observable $\calO$, MPO truncation error $\varepsilon(\lambda_0)$, Trotter error $\calE$, and coefficients $c_j$ obtained using Algorithm~\ref{alg:MPO} one has $
    |\Tr(\calO \rho(t))-\sum_{j=1}^r c_j(t) \Tr(\calO\rho_{k_j}(t)))|\le |\Tr(\calO (\rho(t)-\mu^D(t)))|\\ + O(\varepsilon(\lambda_0)\calE)$
\end{itemize}
Point $2)$ suggests that the errors on the observables arising from the finite bond dimension are dominated by $|\Tr(\calO (\rho(t)-\mu^D(t)))|$ -- the error of the dynamic MPF with exact $M_{ij}$ and $L_j$, while the error due to MPO truncation, $O(\varepsilon(\lambda_0)\calE)$ is proportional to the product of the Trotter error and truncation error, i.e. there are no first order error terms arising from the Tensor Network part of our algorithm - see equation (\ref{eq:mpf-mpo-error-obs}) and surrounding text.

Below we further illustrate point $1)$ above numerically for the well-studied Heisenberg model with Hamiltonian $H$ given by:
\begin{equation}\label{heisenberg}
    H = -\sum\limits_{i=1}^{L-1} (S^x_i S^x_{i+1} + S^y_i S^y_{i+1} + S^z_i S^z_{i+1}) 
\end{equation}
with $S^x_i = \frac{1}{2}\sigma^x_i$, $S^y_i = \frac{1}{2}\sigma^y_i$ and $S^z_i = \frac{1}{2}\sigma^z_i$ where $\sigma^x$, $\sigma^y$ and $\sigma^z$ are the Pauli matrices. We consider the initial state to be the N{\'e}el state: $\ket{\psi_0} = \ket{1010...}$. In Figure \ref{bootstrap} we compare three different simulation methods: 
\begin{itemize}
    \item $\rho_{k_{max}}=\ket{\psi_{k_{max}}}\bra{\psi_{k_{max}}}$: $k_{max}=\max\{k_1,k_2\}$, $\ket{\psi_{k_{max}}}$ is 2nd-order Trotter product formula implemented using MPS-based classical simulation \cite{tenpy} with large bond dimension ($\chi = 400$). 
    \item $\rho_{dynMPF}=c_1 \ket{\psi_{k_{1}}}\bra{\psi_{k_{1}}}+c_2\ket{\psi_{k_2}}\bra{\psi_{k_{2}}}$: $c_1$, $c_2$ computed by minimizing~(\ref{proj_error}) provided the overlaps $M_{ij}$ and $L_j$ are computed as per Algorithm \ref{alg:MPO} with low MPO bond dimension ($\chi = 50$), $\ket{\psi_{k_{1}}}$ is computed similarly to $\ket{\psi_{k_{max}}}$.
    \item $\rho_{MPS}=\ket{\psi_{MPS}(t)}\bra{\psi_{MPS}(t)}$: $\ket{\psi_{MPS}(t)}$ calculated using MPS-based classical simulation \cite{tenpy} with relatively low MPS bond dimension ($\chi=100$) and a very low time step $dt$ such that there is effectively no Trotterization error. 
\end{itemize}
Each of the above methods is then characterized by its error defined as the distance (in Frobenious norm) to a ``quasi-exact" density matrix $\rho$. More specifically, the error of $\rho_{k_{max}}$ is given by
\begin{equation}\label{efkmax}
\begin{aligned}
    E_F^{k_{max}} & = \|\rho-\rho_{k_{max}}\|_F ^2\\
    & = 2 - 2 \Tr(\rho \rho_{k_{max}}) \\
    & = 2 - 2 |\bra{\psi_0}S^{k_{max}}\left(\frac{t}{k_{max}}\right)e^{-iHt} \ket{\psi_0}|^2 
\end{aligned}
\end{equation}
Here $e^{-iHt} \ket{\psi_0}$ is computed with 4th-order Trotter product formula, a very low time step $dt$ and a large bond dimension ($\chi = 400$).  Similarly, the Frobenius norm error of the MPS simulation is given by 
\begin{equation}\label{efmps}
    E_F^{MPS} = \|\rho-\rho_{MPS}\|_F ^2 = 2 - 2| \bra{\psi_{MPS}(t)} e^{-iHt} \ket{\psi_0} |^2
\end{equation}
where $\ket{\psi_{MPS}(t)}$ is an approximation to the exact state after truncation of the bond dimension and renormalization of the singular values such that $|\braket{\psi_{MPS}(t) | \psi_{MPS}(t)}|^2 = 1$. The Frobenius norm of the dynamic MPF error is defined in equation (\ref{proj_error}) - however for a fair comparison with $E_F^{MPS}$ and $E_F^{k_{max}}$ one must take great care with how (\ref{proj_error}) is applied. In particular, we first calculate $M_{ij}$ and $L_j$ using Algorithm \ref{alg:MPO} and denote the resulting quantities by $M_{ij}^{MPO}$ and $L_j^{MPO}$. We then input these quantities to the RHS of equation (\ref{proj_error}) and find the coefficients $c_i^{MPO}$ that minimise this quadratic function, subject to the constraint $\sum\limits_i c_i = 1$. We then calculate $M_{ij}$ and $L_j$ quasi-exactly, i.e. using heavy numerical simulations with bond dimension $\chi = 400$. We denote the resulting quantities by $M_{ij}^{ex}$ and $L_j^{ex}$. We then define the quantity $E_F^{MPO-MPF}$: 
\begin{equation}\label{Efmpo}
\begin{aligned}
E_F^{MPO-MPF} =  1 + &\sum_{i,j=1}^r M_{ij}^{ex}(t) c_i^{MPO}(t) c_j^{MPO}(t)  \\
&- 2\sum_{i=1}^r L_i^{ex}(t) c_i^{MPO}(t)
\end{aligned}
\end{equation}
In Figure \ref{bootstrap}, we see that $E_F^{MPO-MPF}$ has the lowest error of the three methods, despite having the lowest bond dimension in its classical part of the workflow (i.e. for the calculation of the coefficients $c_i^{MPO}(t)$).\\

\begin{figure}
\centering
\includegraphics[width=\columnwidth]{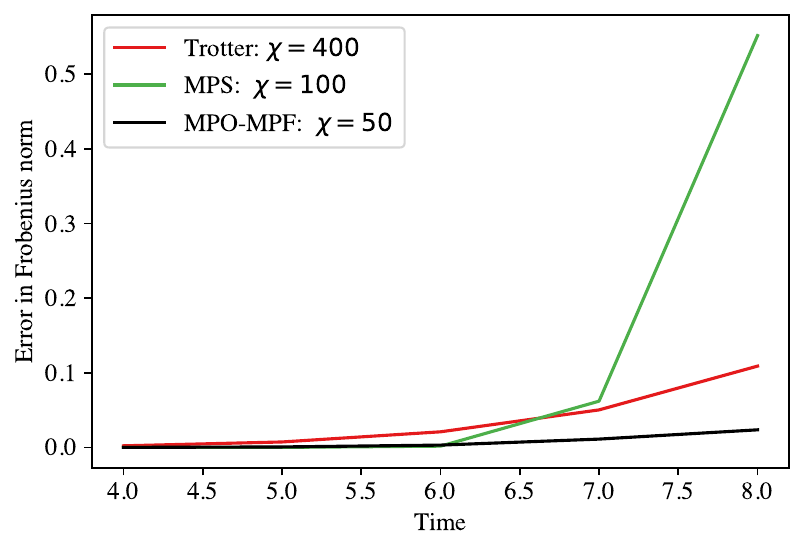}
\caption{The Frobenius norm of the error in the density matrices produced by three different simulation methods of the time evolution with the $50$-qubit Hamiltonian in (\ref{heisenberg}) acting on the initial state $\ket{\psi_0}=\ket{1010...}$. Red line: $E_F^{k_{max}}$ in equation (\ref{efkmax}) with $k_{max}=k_1=18$, $\chi=400$. Green line: $E_F^{MPS}$ in equation (\ref{efmps}) with $\chi=100$ and very small time step ($dt=0.025$). Black line: $E_F^{MPO-MPF}$ in equation (\ref{Efmpo}) with $\chi=50$ and two Trotter formulas used in the multiproduct, $k_1=18$ and $k_2=14$. The error arising from dynamic MPO-MPF with bond dimension $\chi=50$ is lower than pure Trotter with bond dimension $\chi=400$ (i.e. a proxy for a quantum device) and for a pure MPS state simulation with bond dimension $\chi=100$. The total memory requirements to store the MPS on $L$ qubits, i.e. the total number of floating point complex numbers, is given by $2L\chi_{mps}^2$ while for the MPO this number is $4L\chi_{mpo}^2$. The total classical memory requirements for the $\chi=100$ MPS curve are thus larger than those of the $\chi=50$ MPO curve. When implemented on a real quantum device, the shot noise is magnified by a factor given by the one-norm of the coefficients $\sum\limits_i|c_i(t)|$. One must thus ensure that a given multiproduct formula is well-conditioned - see e.g. \cite{vazquez2023well}. The multiproduct formula considered here is indeed well conditioned - the one-norm $\sum\limits_i|c_i(t)|$ produced from the MPF algorithm is less than $4.5$ for all times considered.}\label{bootstrap}
\end{figure}

\section{Results}\label{sec:results}
In this section we provide a thorough numerical evaluation of the quantum/classical workflow outlined in Section~\ref{sec:mpo-mpf}. In particular, we calculate dynamic MPF coefficients $c_j(t)$ classically using Algorithm \ref{alg:MPO} and we use a quantum computer to calculate expectation values $\Tr(\calO \rho_{k_j})$ for a local observable $\calO$, namely for one and two-site operators $\sigma^z_i$ and $\sigma^z_i\sigma^z_{i+1}$, and for a Hamiltonian similar to~\eqref{heisenberg}, but with adjusted coefficients to ensure that it is not integrable:
\begin{equation}\label{h_general}
    H = -\sum\limits_{i=1}^{L-1} (J_i(S^x_i S^x_{i+1} + S^y_i S^y_{i+1}) + \Delta_iS^z_i S^z_{i+1} )
\end{equation}
In~\eqref{h_general} $J_i$ are sampled from a uniform distribution supported on $[\frac{1}{4},\frac{3}{4}]$, and $\Delta_i = 2J_i$. In what follows we show that dynamic MPF with ``shallow product formulas" can provide a comparable precision to that of a deeper (and hence more accurate) product formula in the case of classical and more importantly quantum simulations thus allowing to operate shallow circuits without loss of precision. 
\paragraph{Evaluation on classical simulations.} Our full workflow is as follows: first we determine the number of Trotter steps $k$ that are required to keep the Trotter error below a given target precision for a second-order Trotter formula, $S_2(t)$. In principle one can obtain an estimate for the number of Trotter steps required by using the rigorous upper bound on the error~\cite{childs2021theory}, however this upper bound accounts for a worst-case scenario, and so it is very likely to significantly overestimate the required number of Trotter steps in practice. Instead, we pick a large $k$ and use Algorithm \ref{alg:MPO} to generate $F_{ex,k}$ in equation (\ref{eq:Fdefs}) and then use equation (\ref{efkmax}) to compute an estimate of the Trotter error using:
\begin{equation}\label{efkF}
    E_F^k = 2-2|\bra{\psi_0} F_{\rm ex,j} \ket{\psi_0}|^2
\end{equation}
Then we pick a set of Trotter steps $k_j$ such that $k_j < k$ for all $j$ and we use equation (\ref{proj_error}) to obtain an estimate of the dynamic MPF error $E_F^D$ arising from these $k_j$. The quantities $M_{ij}$ and $L_i$ in (\ref{proj_error}) are obtained by calculating $F_{ij}$ and $F_{ex,j}$ using Algorithm \ref{alg:MPO} followed by the application of equations (\ref{mij}) and (\ref{lj}). We choose the minimum values of $k_j$ such that the resulting dynamic MPF, $\mu^D(t)=\sum_{j=1}^r c_j(t) \rho_{k_j}(t)$ of $r$ density matrices $\rho_{k_j}(t)=S_2(t/{k_j})^{k_j}\rho_{0}S_2(t/{k_j})^{-k_j}$ has a smaller algorithmic error than the deeper single Trotter circuit, i.e.  $E_F^D(k_j) \leq E_F^k$ with $k>k_j$ for all $j$. In the top panel of Figure \ref{trott_errors_mpo_mps}, we plot the quantity $E_F^D$ - i.e. the error in Frobenius norm for $\mu^D$ with three terms, $k_1=2$, $k_2=3$ and $k_3=4$. We compare this with the Frobenius norm error $E_F^k$ for a single Trotter circuit with $k=6$ and also with a single Trotter circuit with $k=4$. We find that the dynamic MPF with three relatively shallow circuits has a comparable algorithmic error to the deeper Trotter circuit up until $t \approx 4.1$ - this crossover point is represented by the dotted blue vertical line labelled ``Trotter test". We plot the same quantities in the bottom panel of Figure \ref{trott_errors_mpo_mps} but where $E_F^D$ and $E_F^k$ have been calculated by explicitly storing the states $e^{-iHt}\ket{\psi_0}$ and $S\left(\frac{t}{k_j} \right)^{k_j}\ket{\psi_0}$ as Matrix Product States and calculating their overlaps. Clearly this is not a scalable approach as the bond dimension required to do this increases exponentially with simulation time, but we do so here for demonstration purposes and take bond dimension $\chi=400$. We observe that the (numerically demanding) MPS simulation and the (much less numerically demanding) MPO simulation in Figures \ref{trott_errors_mpo_mps} predict the same crossover point between the errors of the $k=6$ Trotter circuit and the dynamic MPF circuits, demonstrating the resilience of the ``Trotter test" to the truncation error induced by the low bond dimension used in the MPO algorithm. Furthermore, this result demonstrates the scalability of Algorithm~\ref{alg:MPO} to larger time scales at which it would no longer be possible to store the quantum states accurately as Matrix Product States. Note that one can also use this MPO-based algorithm to test the validity of implementing a \textit{static} multiproduct formula \cite{vazquez2023well, zhuk2023trotter}.\\

\begin{figure}
\centering
\includegraphics[width=\columnwidth]{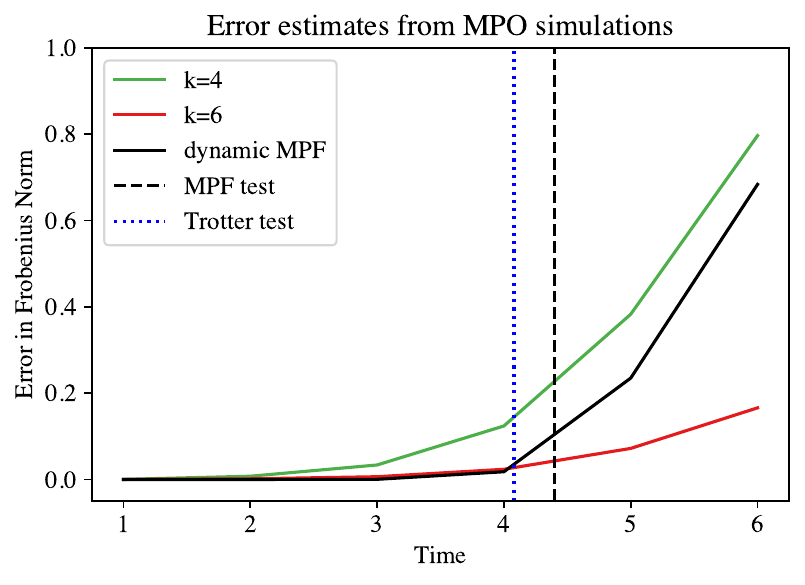}
\includegraphics[width=\columnwidth]{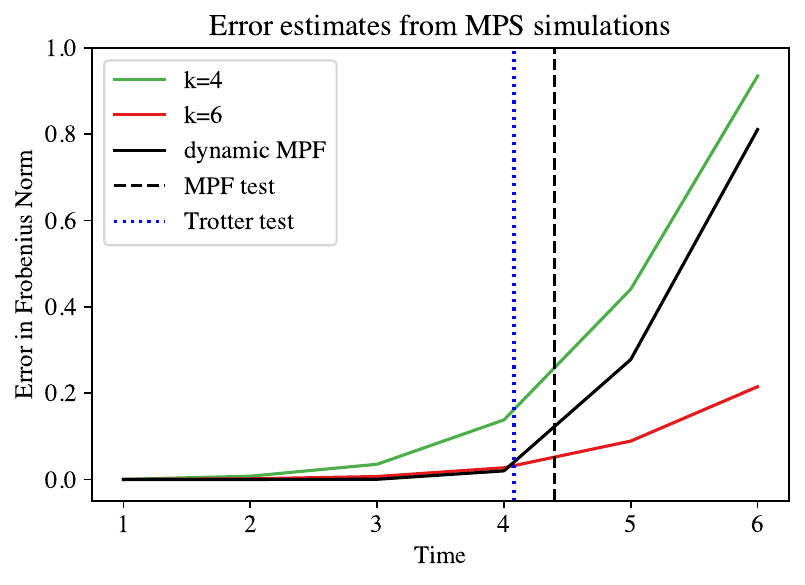}
\caption{Green line: $E_F^{k_{max}}$ in equation (\ref{efkmax}) with $k_{max}=4$. Red line: $E_F^{k_{max}}$ in equation (\ref{efkmax}) with $k_{max}=6$. Black line: $E_F^D$ in equation (\ref{proj_error}) with $k_1=2$, $k_2=3$ and $k_3=4$. Dotted blue vertical line: the time at which the error on the dynamic MPF formula with $k_1=2$, $k_2=3$ , $k_3=4$ becomes larger than the Trotter formula with $k=6$. Dashed black vertical line: the time at which the inequality in (\ref{r_plus_1_test}) no longer holds. In the top panel, all quantities were evalulated by generating the relevant MPOs using Algorithm \ref{alg:MPO} followed by the application of equations (\ref{mij}), (\ref{lj}) and ({\ref{efkF}}). In the bottom panel, all quantities were calculated by explicitly storing the relevant quantum states as Matrix Product States with $\chi=400$ and calculating their overlaps - see main text for discussion.
}\label{trott_errors_mpo_mps}
\end{figure}

\noindent We have so far compared the errors in Frobenius norm of the density matrices corresponding to one Trotter product formula vs MPF, but we should also consider errors measured in 1-norms, i.e. $\|\rho-\mu^D\|_1$ and $\|\rho-\rho_{k}\|_1$, as these quantities more closely capture the errors on the expectation values of the observables. Indeed, $\|\rho-\mu^D\|_1=\max_{\calO:\|\calO\|=1}|\Tr(\calO(\rho-\mu^D))|$, i.e. 1-norm represents the worst-case error of all possible expected values $|\Tr(\calO(\rho-\mu^D))|$. We have the following inequalities:
\begin{equation}
    \|\rho-\mu^D\|_F\le\|\rho-\mu^D\|_1\le\sqrt{r+1}\|\rho-\mu^D\|_F
\end{equation}
 where $r$ is the rank of $\mu^D$, i.e. the number of terms in the multiproduct formula $\mu^D$. Similarly, we have:
 \begin{equation}
     \|\rho-\rho_{k_j}\|_F\le\|\rho-\rho_{k_j}\|_1\le\sqrt{2}\|\rho-\rho_{k_j}\|_F
 \end{equation}
We can guarantee that the one-norm of the multiproduct error is lower than the one norm of the Trotter error, i.e. $\|\rho-\mu^D\|_1 \le \|\rho-\rho_{k_j}\|_1$ if the following inequality holds:
\begin{equation}\label{r_plus_1_test}
    \sqrt{r+1}\|\rho-\mu^D\|_F \le \|\rho-\rho_{k_j}\|_F
\end{equation}
The dashed black vertical line labelled ``MPF test" in Figure \ref{trott_errors_mpo_mps} marks the time beyond which the inequality in (\ref{r_plus_1_test}) no longer holds. This does not mean that we can only apply MPF in the allowed time regime where (\ref{r_plus_1_test}) holds, but that if one wants to provide a rigorous guarantee that dynamic MPF with a particular set of $k$'s will perform better than each of the individual $k_j$'s, then one must be in this allowed time regime. This test can also be used to test the validity of the static MPF coefficients when we consider times that are outside the window for which we have a rigorous guarantee on the performance of static MPF~\cite{zhuk2023trotter}.
\begin{figure}
\centering
\includegraphics[scale = 1.0, width=\columnwidth]{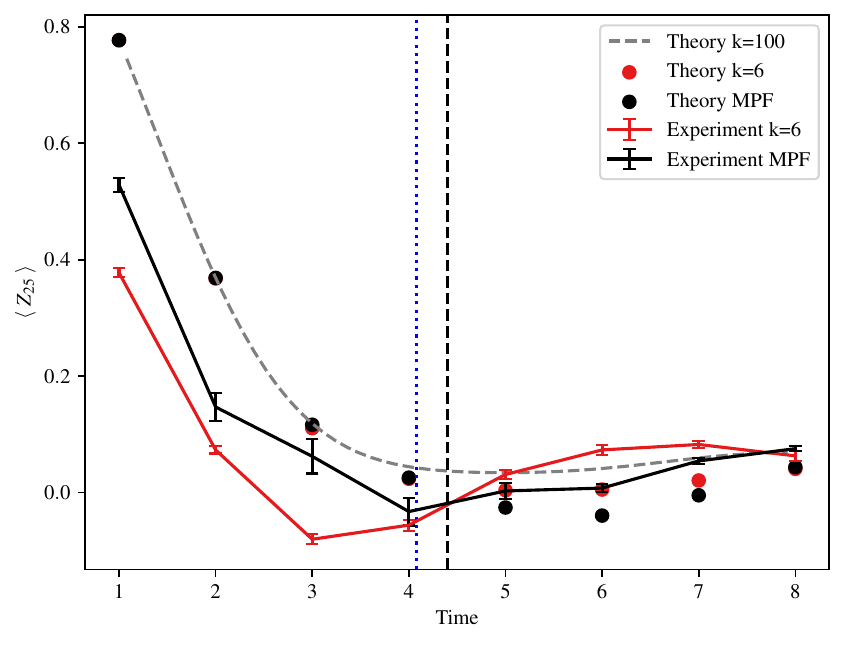}
\includegraphics[scale=1.0, width=\columnwidth]{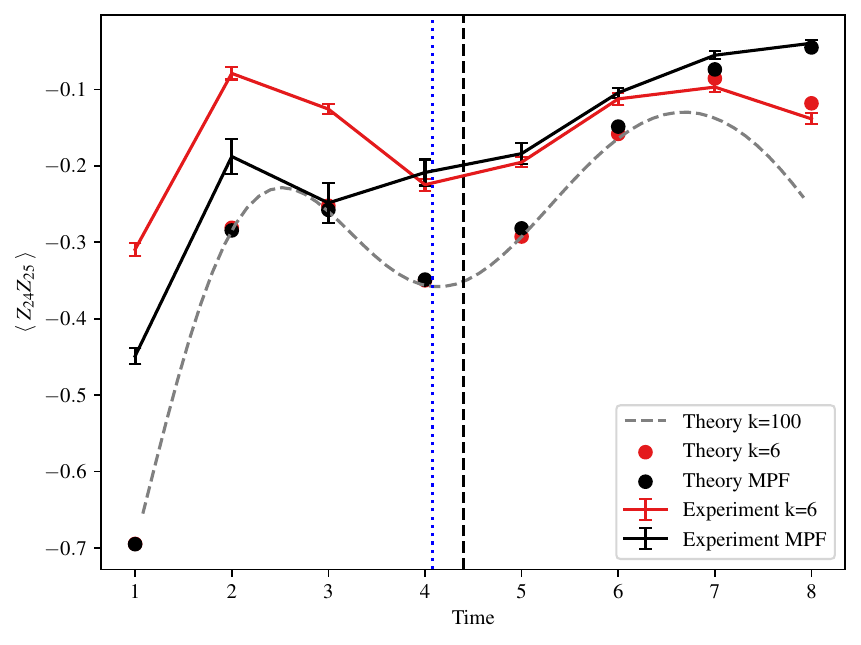}
\caption{The expectation values of the observables $\sigma^z_{25}$ and $\sigma^z_{24}\sigma^z_{25}$ after time evolution with the $50$-qubit Hamiltonian in equation (\ref{h_general}) as measured on the quantum computer $\texttt{ibm\char`_torino}$. The dynamic MPF performs better (i.e. is closer to the exact curve)  than the single Trotter circuit at times before the ``Trotter-test" and ``MPF-test" (i.e. dotted blue and dashed black vertical lines respectively as in Figure \ref{trott_errors_mpo_mps}). This is due to the reduced depths of the circuits used in the dynamic MPF.}\label{fig:heron_expec}
\end{figure}

\paragraph{Evaluation on quantum simulations.} Now we turn to the results of the MPO-MPF algorithm on quantum hardware (see Appendix~\ref{app:device_spec} for the details of the configuration). In Figure \ref{fig:heron_expec}, we plot the expectation values of two observables in the middle of the $50$-qubit spin chain, namely $\sigma^z_{25}$ and $\sigma^z_{24}\sigma^z_{25}$ respectively, as measured on the quantum computer \texttt{ibm\char`_torino}. The quantum device experiments utilize dynamical decoupling to suppress non-Markovian and crosstalk errors~\cite{pokharel2018demonstration,ezzell2023dynamical}, incorporate Pauli twirling to suppress coherent errors~\cite{vidal2003efficient}, and twirled readout extinction~\cite{van2022model} for measurement errors. See appendix \ref{sec:mpf_pea} for additional results with probabilistic error amplification (PEA) implemented on \texttt{ibm\char`_kyiv}. As above we compare a $k=6$ Trotter circuit with a dynamic multiproduct formula with $k_1=2$, $k_2=3$ and $k_3=4$. These values of $k_1, k_2, k_3$ and $k$ where chosen to ensure that the algorithmic error of the dynamic multiproduct is comparable to the algorithmic error of the Trotter circuit $k=6$; as shown in Figure \ref{trott_errors_mpo_mps}  this is the case up until time $t=4.1$ for the circuits in question. By keeping the algorithmic errors roughly equal for the Trotter circuit and the dynamic multiproduct, we can accurately assess the reduction in the device error arising from the reduced circuit depth of the circuits used in the dynamic multiproduct formula as compared to a single Trotter circuit. As shown in Figure \ref{fig:heron_expec}, the dynamic MPF results are more accurate than the Trotter results up to the ``Trotter test" cut-off time when implemented on quantum hardware. \\

We now consider the combination of our dynamic MPF algorithm with the $AQCtensor$ algorithm outlined in section \ref{sec:aqctensor}. In Figure \ref{fig:AQC_observables}, we consider the same quantities as measured on the same quantum device as in Figure \ref{fig:heron_expec} but where both curves have been enhanced by the $AQCtensor$ algorithm.  More precisely, for each value of time $t$ in Figure \ref{fig:AQC_observables} we consider two windows of time $t_1$ and $t_2$ such that $t=t_1+t_2$. The $AQCtensor$ algorithm finds a circuit of the same depth as a $k=2$ Trotter circuit that approximates the exact time evolution $e^{-iHt}$ up until a time $t_1$ quasi-exactly, i.e. with $~0.99$ fidelity. A standard Trotter circuit with either $k=2$ or $k=1$ that approximates the time evolution operator for a time window of length $t_2$ is then appended to the optimized circuit to simulate times $t$ that are larger than $t_1$. In this instance, we take $t_1=3$ and hence $t_2=t-3$.  In general, the value of $t_1$ can be taken to be as large as the available classical resources can simulate the quantum system with a given desired precision. For each value of $t$ in Figure \ref{fig:AQC_observables} we have two quantum circuits, one with the $t_2$ time window divided into $k=2$ Trotter steps and one where $t_2$ is approximated with $k=1$ Trotter steps. The red curve represents the results of the first of these circuits  and the black curve represents the result of combining the two circuits with a dynamic multiproduct. In the $AQCtensor$ workflow, the observables can be calculated quasi-exactly using the classical MPS based method for times up until $t=t_1$, hence we do not include results from quantum hardware for $t<t_1$. To calculate the dynamic multiproduct coefficients $c_j(t)$ for these $AQCtensor$ circuits we use Algorithm \ref{alg:MPO} and the method outlined in \ref{sec:mpo-mpf}, but with a slight modification of the quantity $F_{ij}$ in equation (\ref{F_def}). In particular, since we now divide each time $t$ into two time windows $t_1$ and $t_2$ where for times $t<t_1$ the system is simulated quasi-exactly using the $AQCtensor$ algorithm, we must define the quantities $F^{AQC}_{ij}$ and $F^{AQC}_{\rm ex,j}$:
\begin{equation}\label{eq:Faqc_defs}
\begin{aligned}
    F^{AQC}_{ij} & \equiv e^{iHt_1}S\left(\frac{t_2}{k_i}\right)^{-k_i}S\left(\frac{t_2}{k_j}\right)^{k_j}e^{-iHt_1}\\
    F^{AQC}_{\rm ex,j} & \equiv e^{iH(t_1+t_2)}S\left(\frac{t_2}{k_j}\right)^{k_j}e^{-iHt_1}\\
\end{aligned}
\end{equation}
The quantities $M_{ij}$ and $L_j$ are then calculated as described in section \ref{sec:mpo-mpf} and the full \emph{MPO-MPF} algorithm proceeds as previously described. Finally, we comment on the ``MPF test" from equation (\ref{r_plus_1_test}) in the context of the $AQCtensor$+dynamic MPF combination. Since the application of $AQCtensor$  evolves the system almost exactly for a time $t_1$, one can expect both the Trotter errors and MPF errors to be reduced and for the inequality in (\ref{r_plus_1_test}) to hold for longer total simulation times $t$. In Figure \ref{fig:AQC_observables} we show that this is indeed the case; the vertical line representing the final time at which the inequality in (\ref{r_plus_1_test}) holds occurs later than in Figure \ref{fig:heron_expec}. Thus one of the key advantages of using $AQCtensor$ is that it allows for the application of our \emph{MPO-MPF} algorithm at times later than would otherwise be feasible.
\begin{figure}
\centering
\includegraphics[width=\columnwidth]{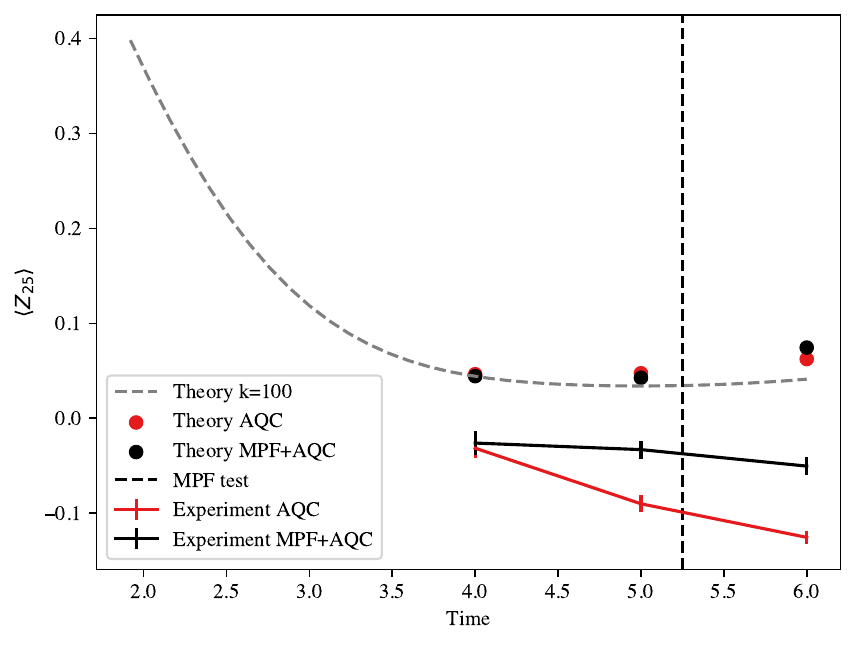}
\includegraphics[width=\columnwidth]{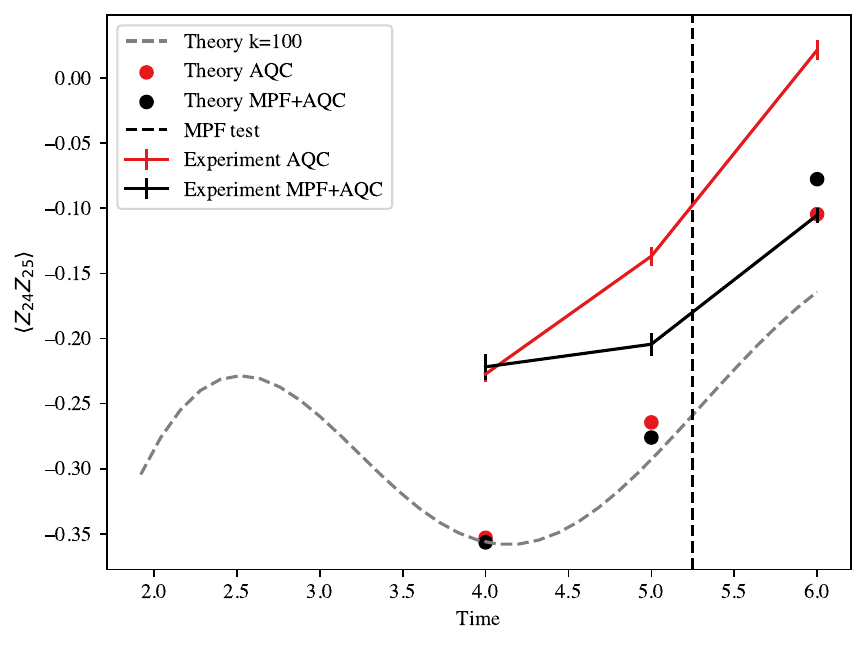}
\caption{The expectation values of the same observables considered in Figure \ref{fig:heron_expec}  measured on \texttt{ibm\char`_torino} but where the circuits have been enhanced by the $AQCtensor$ algorithm. In particular, the system is simulated almost exactly using a classical MPS-based algorithm up to time $t_1=3$ and an optimized quantum circuit is found which closely approximates this exact evolution up to time $t_1$. This optimized circuit has the same depth and same structure as a Trotter circuit with $k=2$. Additional Trotter circuits with $k=2$ and $k=1$ are appended to the optimized circuit to simulate times beyond $t=3$. The red curve represents the results of the $AQCtensor$ + $k=2$ Trotter circuit and the black curve represents the result of combining the two circuits with our dynamic multiproduct algorithm. The dashed black vertical line represents the ``MPF-test" which occurs at a later time than in Figure \ref{fig:heron_expec} due to the effect of $AQCtensor$ - see main text.}\label{fig:AQC_observables}
\end{figure}

\section{Discussion}\label{sec:discussion}
In this work, we considered the question of whether one can apply state of the art classical Tensor Network algorithms in tandem with quantum simulations to provide an advantage over applying either method in isolation. We addressed this question by introducing a novel dynamic multiproduct algorithm based on Matrix Product Operators and by combining it with the previously proposed $AQCtensor$ algorithm \cite{robertson2023approximate}. The error analysis in Figure \ref{bootstrap} and in Appendix \ref{sec:error-analysis} demonstrates that our algorithm does indeed provide an advantage over a purely quantum or purely classical approach. Furthemore, we demonstrated our algorithm on the quantum processors \texttt{ibm\char`_torino} and \texttt{ibm\char`_kyiv} (see appendix~\ref{sec:mpf_pea}) and showed how our algorithm allowed for the reduction in circuit depth and hence a reduction in circuit errors - see e.g. Figures \ref{fig:heron_expec} and \ref{fig:kyiv_expec}. We anticipate that this work may inspire similar Tensor Network + quantum algorithms for applications beyond that of the simulation of a $1D$-spin chain considered here. Furthermore, we also briefly comment on how the algorithm presented here could be applied to more complex models, e.g. in higher dimensions. While Matrix Product State based methods are not well suited to study models in dimensions larger than $1$, we note that our algorithm never explicitly stores the quantum state itself but only the Matrix Product Operator representing the object $F$ defined in equation~\eqref{eq:Fdefs}. The theoretical error analysis in appendix \ref{sec:error-analysis} should still apply beyond 1D - more numerical work is needed to determine how well the \emph{MPO-MPF} method would work in practice in $2D$ or if a more sophisticated Tensor Network approach that goes beyond Matrix Product Operators would be required.

\begin{acknowledgments}
The authors would like to thank Antonio Mezzacapo for useful discussions, guidance and coordination of this project. The authors thank Nicolas Lorente, Kate Marshall and Max Rossmannek for useful discussions.
\end{acknowledgments}

\clearpage
\appendix
\section{Error Analysis}\label{sec:error-analysis}

In section \ref{sec:mpo-mpf}, the precision and memory requirements to store $F_{ij}$ in Algorithm \ref{alg:MPO}
were briefly discussed - we elaborate on these points here. Recall that to find the dynamic MPF coefficients $c_j(t)$ we minimize $E_F^D$ in equation (\ref{proj_error}) subject to the constraint $\sum\limits_i c_i = 1$. To do so, we must calculate the quantities $M_{ij}$ and $L_j$ defined in equations (\ref{eq:Mt}) and (\ref{eq:Lexact}), which can be rewritten in terms of $F_{ij}$ and $F_{ex,j}$ - see equation (\ref{eq:Fdefs}). Recall that $F_{0j} \equiv S^{\dagger}\left(\frac{t}{k_0}\right)^{k_0} S\left(\frac{t}{k_j}\right)^{k_j}  \approx e^{iHt} S\left(\frac{t}{k_j}\right)^{k_j}$ with $k_0 >> k_j$. In what follows, we thus consider the object:
\begin{equation}\label{F_def}
    F \equiv e^{iHt} S\left(\frac{t}{k}\right)^{k}
\end{equation}
and we take the Hamiltonian in equation (\ref{h_general}) in the main text. We consider the memory requirements to generate and store $F$ as a Matrix Product Operator and compare these to the requirements to store the quantum state $\ket{\psi_t} = e^{-iHt} \ket{\psi_0}$ as a Matrix Product State. In classical time evolution algorithms such as TEBD one stores $\ket{\psi_t}$ as an MPS by repeated applications of SVD and by truncating the representation to include only singular values that are larger than a target precision threshold $\lambda_0$. The bond dimension required to store the state with precision $\lambda_0$ increases exponentially with time $t$:
\begin{equation}\label{chi_mps_scaling}
    \chi_{mps}(\lambda_0) =  f(\lambda_0) e^{v_0 kdt}  = f(\lambda_0) e^{v_0 t} 
\end{equation}
where $dt=\frac{t}{k}$. We demonstrate this behaviour in Figure \ref{chi_mps_vs_t}. In the top panel, we plot $\log\chi_{mps}$ vs $t$ for a range of values of $\lambda_0$ and find an estimate for $f(\lambda_0)$ and $v_0$ by fitting a linear function to each curve. In the bottom panel, we plot these estimates for $f(\lambda_0)$ and $v_0$ vs $\lambda_0$ and find that, as expected, $f(\lambda_0)$ is strongly dependent on $\lambda_0$ whereas $v_0$ is approximately constant.\\

\begin{figure}
\centering
\includegraphics[width=\columnwidth]{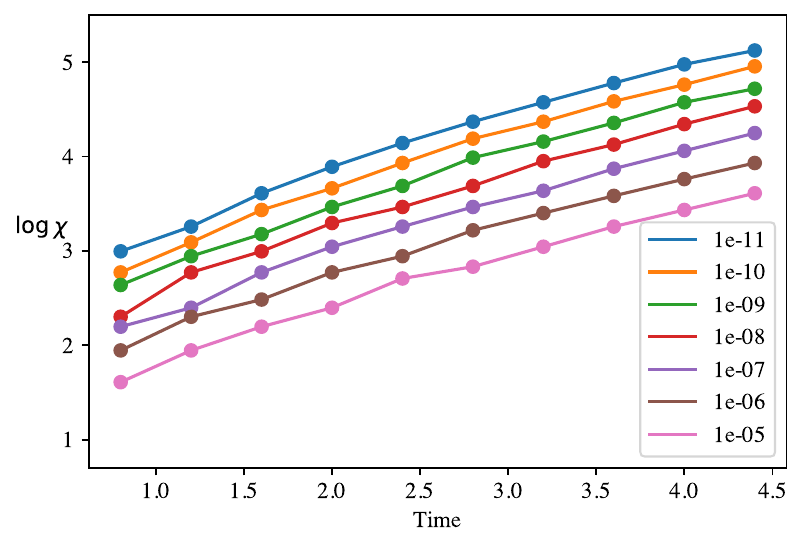}
\includegraphics[width=\columnwidth]{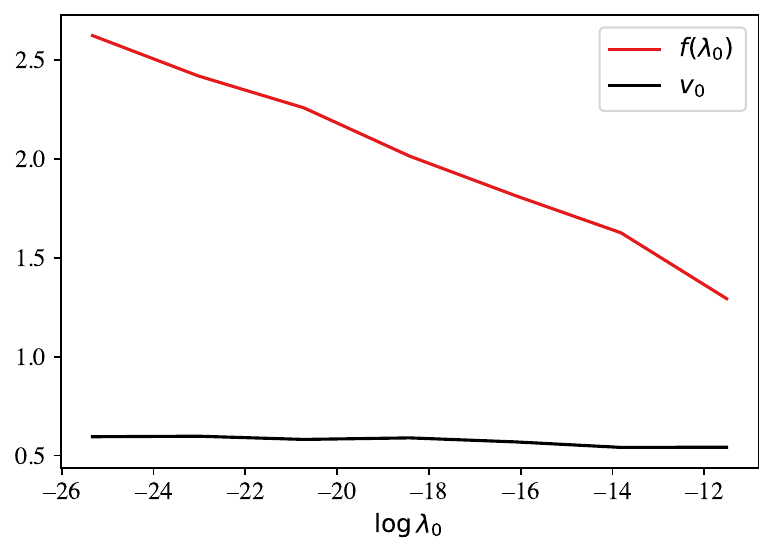}
\caption{Top panel: demonstration of the scaling behaviour of the bond dimension $\chi_{mps}$ in equation (\ref{chi_mps_scaling}). $\log\chi_{mps}$ is plotted  vs time for a range of values of the cut off $\lambda_0$. Bottom panel: the quantities $f(\lambda_0)$ and $v_0$ are found by fitting a linear function to each of the curves in the top panel. As expected, one sees that $f$ strongly depends on $\lambda_0$ while $v_0$ does not.}\label{chi_mps_vs_t}
\end{figure}

\noindent We expect a similar behaviour for the bond dimension required to store the Matrix Product Operator representation of the full time evolution operator $e^{-iHt}$:
\begin{equation}\label{chi_mpo_scaling_ex}
\chi_{mpo}(\lambda_0) = g(\lambda_0) e^{v_1kdt} = g(\lambda_0) e^{v_1 t}
\end{equation}
Now instead of the unitary $e^{-iHt}$, we consider the bond dimension required to store the object $F$ in (\ref{F_def}) with precision $\lambda_0$.  We expand $F$ using the BCH formula to get:
\begin{equation}\label{F_BCH}
    F \approx e^{-idt^3 k C} = e^{-i t dt^2 C}
\end{equation}
where $C$ is given by a sum of nested commutators. Equation (\ref{F_BCH}) suggests that one can consider $F$ as the time evolution operator with effective Hamiltonian $H_{eff}=Cdt^2$. In place of equation (\ref{chi_mpo_scaling_ex}), we thus write:
\begin{equation}\label{chi_mpo_scaling}
\chi_{mpo}^F(\lambda_0) = g(\lambda_0) e^{v_1kdt^{\alpha}} = g(\lambda_0) e^{v_1\frac{t^{\alpha}}{k^{\alpha-1}}}
\end{equation}
where $\alpha$ is some constant that should be lower-bounded by $3$. We test this scaling behaviour in Figures \ref{chi_k} and \ref{chi_dt}. In Figure \ref{chi_k}, we plot $\log\log(\chi_{mpo}^F)$ vs $\log k$ with $t=4$ and find that $\alpha\approx 4.7$. Similarly, in Figure \ref{chi_dt} we plot $\log(\chi_{mpo}^F)$ vs time for fixed $dt$. Equations (\ref{chi_mpo_scaling}) and (\ref{chi_mps_scaling}) thus show that for any desired precision $\lambda_0$ of the quantities $M_{ij}$ and $L_j$ in equation (\ref{proj_error}), there exists a value of $k$ above which $\chi_{mpo}^F(\lambda_0) < c \chi_{mps}(\lambda_0) $ for any arbitrary constant $c$ and hence one gains an advantage from using the MPO-MPF algorithm outlined in section \ref{sec:mpo-mpf} over using either a purely classical MPS-based approach or a purely quantum Trotterization-based approach.
\begin{figure}
\centering
\includegraphics[width=\columnwidth]{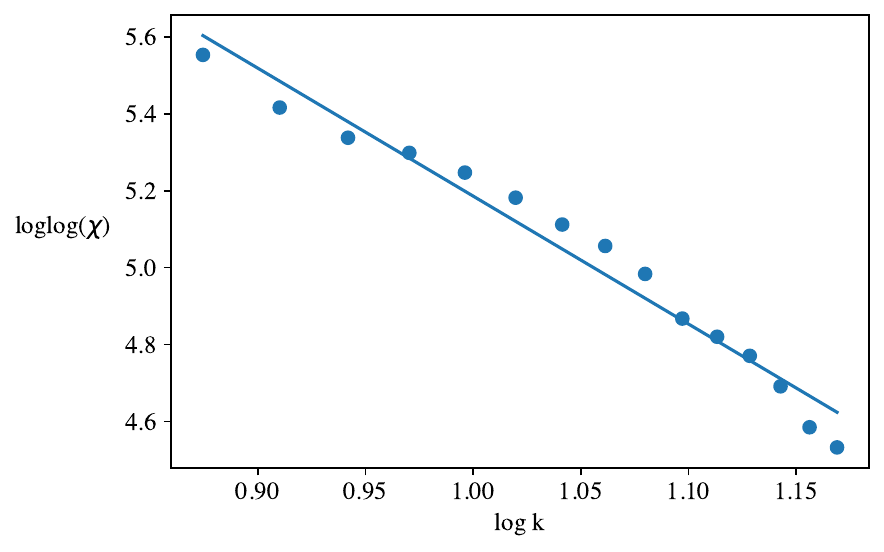}
\caption{A test of the scaling of $\chi_{mpo}$ with $k$ for fixed $T$ - see equation (\ref{chi_mpo_scaling}). The slope of the line is $-3.7$, suggesting that $\alpha\approx4.7$ in equation (\ref{chi_mpo_scaling}).}\label{chi_k}
\end{figure}

\begin{figure}
\centering
\includegraphics[width=\columnwidth]{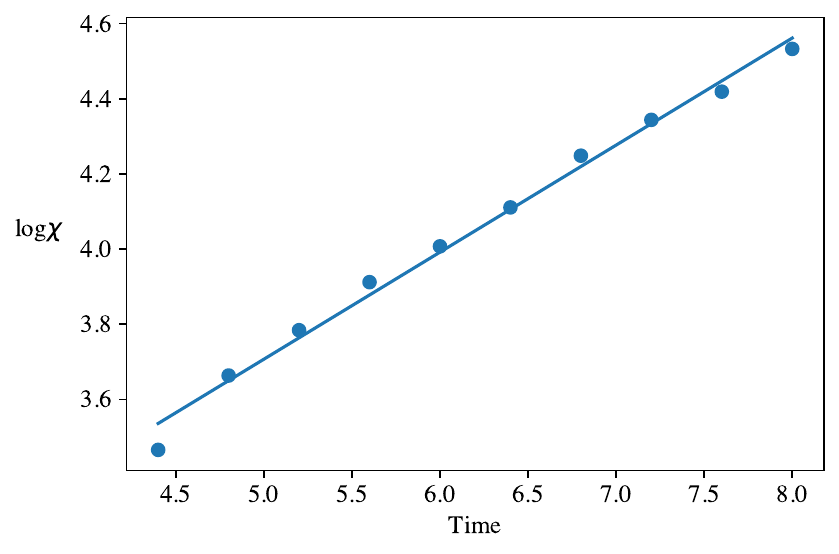}
\caption{A test of the scaling of $\chi_{mpo}$ with $k$ for fixed $dt$.}\label{chi_dt}
\end{figure}
\paragraph{MPO truncation error for MPF coefficients.} We now discuss the effect of the truncation error $\lambda_0$ on the MPF coefficients $c_j$ of the dynamic multiproduct formula $\mu^D=\sum_{j=1}^r c_j\rho_{k_j}$ and on the MPF observables $\sum_{j=1}^r c_j \mathcal{O}_j$, $\calO_j=\operatorname{Tr}(O\rho_{k_j}(t))$.\\ 
Let $\mpoM$ and $\mpoL$ denote MPO-approximations of the exact matrix $M$ and vector $L$ defined in~\eqref{eq:Lexact}. Let $\mpoMerr=M^{(\lambda_0)}-M$ be the matrix of MPO-approximation errors, and set $\mpoLerr=\mpoL-L$. Let $\oc$ denote the vector of exact dynamic MPF coefficients, the unique minimizer of~\eqref{proj_error} with exact $M$ and $L$, and let $\mpoC$ denote the unique minimizer of~\eqref{proj_error} with $M$ and $L$ substituted by $\mpoM$ and $\mpoL$ respectively. Clearly, $\oc$ can be computed by solving the linear equation $M \oc=L+\mu \one$ where Lagrange multiplier $\mu$ is chosen so that $\la\one|\oc\ra=1$: $\oc = M^{-1} L + \mu M^{-1} \one$ and $\mu^\star=(1-\la\one|M^{-1} |L\ra)/ \la\one|M^{-1}| \one\ra$. 

\begin{lemma}\label{lemma:mpo_error}
    Let $\mpoC=(\mpoM)^{-1}(\mpoL + \mu\one)$ and $\mu$ is chosen to satisfy $\la\one|\mpoC\ra=1$. Then for any $1>\varepsilon>0$ there exist $\lambda_0$ such that $\|L-\mpoL\|_2,\|M-\mpoM\|<\varepsilon<1$ and     
    \begin{equation}\label{eq:mpf-mpo-error-sb}
\|\oc-\mpoC\|_2\le \varepsilon \frac{  \|M^{-1} \| + \|\oc\|_2 }{1-\varepsilon},
\end{equation}
For $\calO=(\calO_1\dots\calO_r)^\top$ let $\calE(t)=O^\star\one-\calO$ be the vector of Trotter errors, where $O^\star=\operatorname{Tr}(O\rho(t))$, $\rho$ -- the exact density matrix. Then 
    \begin{equation}\label{eq:mpf-mpo-error-obs}
        |\la\mpoC|\calO\ra-O^\star|\le |\la\oc|\calE\ra|+\varepsilon \frac{  \|M^{-1} \| + \|\oc\|_2 }{1-\varepsilon}\|\calE\|_2
    \end{equation}  
\end{lemma}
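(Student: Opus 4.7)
The plan is to derive the observable estimate \eqref{eq:mpf-mpo-error-obs} as a short consequence of the coefficient estimate \eqref{eq:mpf-mpo-error-sb}, so that the bulk of the work concentrates on a perturbation analysis of the constrained least-squares problem defining $\mpoC$. For \eqref{eq:mpf-mpo-error-obs} I would use the decomposition $\calO = O^\star \one - \calE$ together with the feasibility constraint $\la\one|\mpoC\ra = 1$ to write $\la\mpoC|\calO\ra - O^\star = -\la\mpoC|\calE\ra$. Splitting $\mpoC = \oc + (\mpoC - \oc)$ turns this into $-\la\oc|\calE\ra - \la\mpoC - \oc|\calE\ra$; a Cauchy--Schwarz bound on the second term, combined with \eqref{eq:mpf-mpo-error-sb} applied to $\|\mpoC - \oc\|_2$, yields \eqref{eq:mpf-mpo-error-obs} verbatim.

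The core task is \eqref{eq:mpf-mpo-error-sb}, which I would attack by standard constrained perturbation analysis. Letting $\tilde\mu$ denote the Lagrange multiplier for the approximate problem, the KKT conditions read $M\oc = L + \mu^\star \one$ and $\mpoM\,\mpoC = \mpoL + \tilde\mu \one$. Subtracting and substituting $\mpoM = M + \mpoMerr$, $\mpoL = L + \mpoLerr$ gives
\begin{equation*}
\mpoM(\mpoC - \oc) = \mpoLerr - \mpoMerr\,\oc + (\tilde\mu - \mu^\star)\one.
\end{equation*}
The primal constraints force $\la\one|\mpoC - \oc\ra = 0$, so projecting this identity onto the orthogonal complement of $\one$ eliminates the Lagrange-multiplier difference and reduces the problem to inverting the restriction of $\mpoM$ to the $(r-1)$-dimensional subspace $\one^\perp$. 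A Neumann-series bound on this inverse in terms of $\|M^{-1}\|$ and $\varepsilon$, together with $\|\mpoLerr - \mpoMerr\,\oc\|_2 \le \|\mpoLerr\|_2 + \|\mpoMerr\|\,\|\oc\|_2 \le \varepsilon(1 + \|\oc\|_2)$, should then deliver the stated inequality.

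Existence of a suitable $\lambda_0$ is the easy ingredient: the MPO overlaps that define $M_{ij}$ and $L_j$ depend continuously on the truncation threshold (cf.\ the discussion around Algorithm~\ref{alg:MPO}), so for any $\varepsilon \in (0,1)$ a sufficiently small $\lambda_0$ makes $\|M - \mpoM\|$ and $\|L - \mpoL\|_2$ simultaneously fall below $\varepsilon$. The delicate step, and the one I expect to require the most care, is the bookkeeping around the Lagrange-multiplier elimination: $\mu^\star$ and $\tilde\mu$ individually involve $\la\one|M^{-1}\one\ra^{-1}$, so a careless substitution will inflate the constant multiplying $\varepsilon$. Projecting once onto $\one^\perp$ at the outset, rather than solving for $\tilde\mu - \mu^\star$ explicitly and plugging it back in, seems the cleanest route to keeping the final bound in the compact form $\varepsilon(\|M^{-1}\| + \|\oc\|_2)/(1-\varepsilon)$.
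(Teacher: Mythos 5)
Your proof of \eqref{eq:mpf-mpo-error-obs} is exactly the paper's: write $\la\mpoC|\calO\ra-O^\star=-\la\mpoC|\calE\ra$ using $\la\one|\mpoC\ra=1$, split off $\oc$, and apply Cauchy--Schwarz plus \eqref{eq:mpf-mpo-error-sb}. For the main estimate \eqref{eq:mpf-mpo-error-sb} you take a genuinely different route. The paper never inverts the perturbed system: it writes $\|\oc-\mpoC\|_2^2=\la\oc-\mpoC|M^{1/2}M^{-1}M^{1/2}|\oc-\mpoC\ra\le\|M^{-1}\|\,\la\oc-\mpoC|M|\oc-\mpoC\ra$ and then bounds the quadratic form by feeding in \emph{both} normal equations $M\oc=L+\mu^\star\one$ and $\mpoM\mpoC=\mpoL+\mu\one$ together with $\la\oc-\mpoC|\one\ra=0$, which leaves only the residual terms $\la\oc-\mpoC|L-\mpoL\ra-\la\oc-\mpoC|M-\mpoM|\mpoC\ra$. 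You instead subtract the two KKT systems, project onto $\one^\perp$ to kill $\tilde\mu-\mu^\star$, and invert $P\mpoM P$ on $\one^\perp$ by a Neumann bound. Both are sound and yield the same quantity; your version is the more standard perturbation argument and makes the Lagrange-multiplier bookkeeping transparent, while the paper's weighted-norm trick only ever inverts the exact $M$ and so postpones the smallness condition to the final step where $\|\oc-\mpoC\|_2$ is isolated. Two caveats, neither fatal and both shared with the paper's own proof: (i) your Neumann step needs $P\mpoM P$ positive definite on $\one^\perp$, which requires $\varepsilon<1/\|M^{-1}\|$ rather than merely $\varepsilon<1$ --- the same condition the paper implicitly uses when it divides by $1-\varepsilon\|M^{-1}\|$; (ii) both derivations naturally produce $\varepsilon\|M^{-1}\|\bigl(1+\|\oc\|_2\bigr)/\bigl(1-\varepsilon\|M^{-1}\|\bigr)$, which matches the displayed constant $\varepsilon\bigl(\|M^{-1}\|+\|\oc\|_2\bigr)/(1-\varepsilon)$ only up to replacing $\|M^{-1}\|\|\oc\|_2$ by $\|\oc\|_2$ and rescaling the denominator, so you should not expect your route (or the paper's) to reproduce \eqref{eq:mpf-mpo-error-sb} \emph{verbatim} without that normalization.
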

\begin{proof}
Using the notion of the square-root of a symmetric positive semidefinite matrix, namely $M^{1/2}M^{1/2}=M$ it is easy to see that:  
\begin{equation}
\begin{split} \label{eq2'}
\| \oc-\mpoC\|^2 &=  \la \oc-\mpoC|M^{1/2} M^{-1} M^{1/2} |\oc-\mpoC \ra\\
&\le \| M^{-1}\|\cdot \la \oc-\mpoC|M|\oc-\mpoC\ra.
\end{split}
\end{equation}
Let us upper bound $\la \oc-\mpoC|M|\oc-\mpoC\ra$. Since $\oc = M^{-1} L + \mu M^{-1} \one$ and $\la \oc-\mpoC|\one\ra=0$ it follows that: 
\begin{align*}
  \la \oc&-\mpoC|M|\oc-\mpoC\ra = \la \oc-\mpoC|L\ra + \mu \la \oc-\mpoC|\one\ra\\
  &- \la \oc-\mpoC|M|\mpoC\ra= \la \oc-\mpoC|L\ra - \la \oc-\mpoC|M|\mpoC\ra   
\end{align*}
Substituting $M=\mpoM+M-\mpoM$ in $\la \oc-\mpoC|M|\mpoC\ra$ and recalling that $\mpoM\mpoC=\mpoL + \mu\one$ with $\mu$ chosen to satisfy $\la\one|\mpoC\ra=1$ we get: 
\begin{align*}
    \label{eq3}
\la \oc&-\mpoC|M|\oc-\mpoC\ra =  \la \oc-\mpoC|L-\mpoL\ra\\
&- \mu\la \oc-\mpoC|\one\ra - \la \oc-\mpoC|M-\mpoM|\mpoC\ra\\
&=\la \oc-\mpoC|L-\mpoL\ra - \la \oc-\mpoC|M-\mpoM|\mpoC\ra\\
&\le \|\oc-\mpoC\|_2(\|L-\mpoL\|_2+\|M-\mpoM\|\|\mpoC\|_2)\\
&\le \|\oc-\mpoC\|_2(\|L-\mpoL\|_2+\\
& \ \ \ \ \  \|M-\mpoM\|\|\oc-\mpoC\|_2\|\oc\|_2)
\end{align*}
Substituting this bound into the r.h.s. of~\eqref{eq2'}, dividing the result by $\|\oc-\mpoC\|_2$ and isolating $\|\oc-\mpoC\|_2$ on the l.h.s. of the resulting inequality we obtain~\eqref{eq:mpf-mpo-error-sb} provided $\|M-\mpoM\|<1$. 

Now, \eqref{eq:mpf-mpo-error-obs} follows by using Cauchy inequality to upper-bound $|\la\mpoC|\calO\ra-\la\mpoC|O^\star\one\ra|$:\[
|\la\mpoC\pm\oc|\calO\ra-\la\mpoC|O^\star\one\ra|\le \|\mpoC-\oc\|_2\|\calE\|_2+|\la\oc|\calE\ra|
\]
followed by~\eqref{eq:mpf-mpo-error-sb}. This concludes the proof. 
\end{proof}





\section{Probabilistic Error Amplification}
\label{sec:mpf_pea}

\begin{figure}
\centering
\includegraphics[width=\columnwidth]{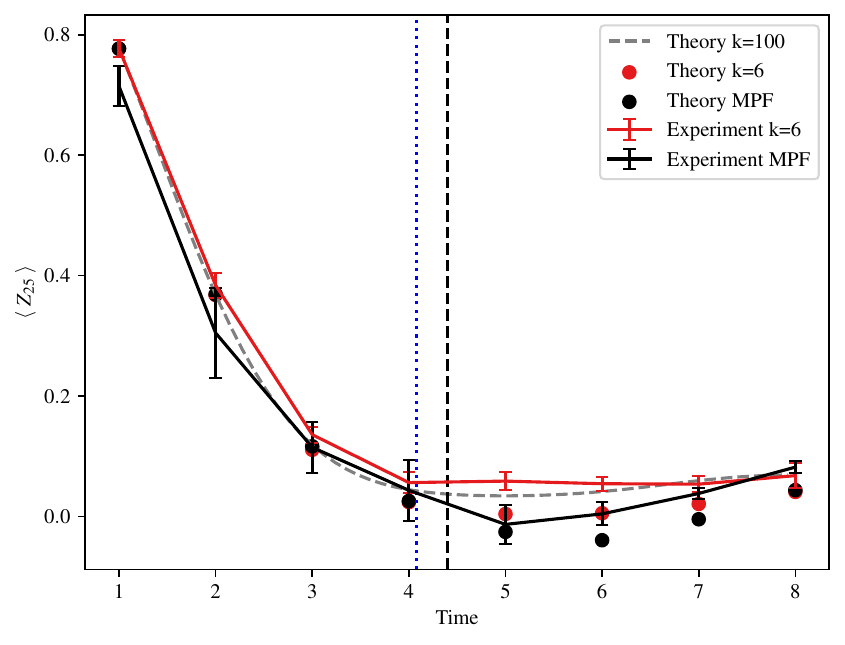}
\includegraphics[width=\columnwidth]{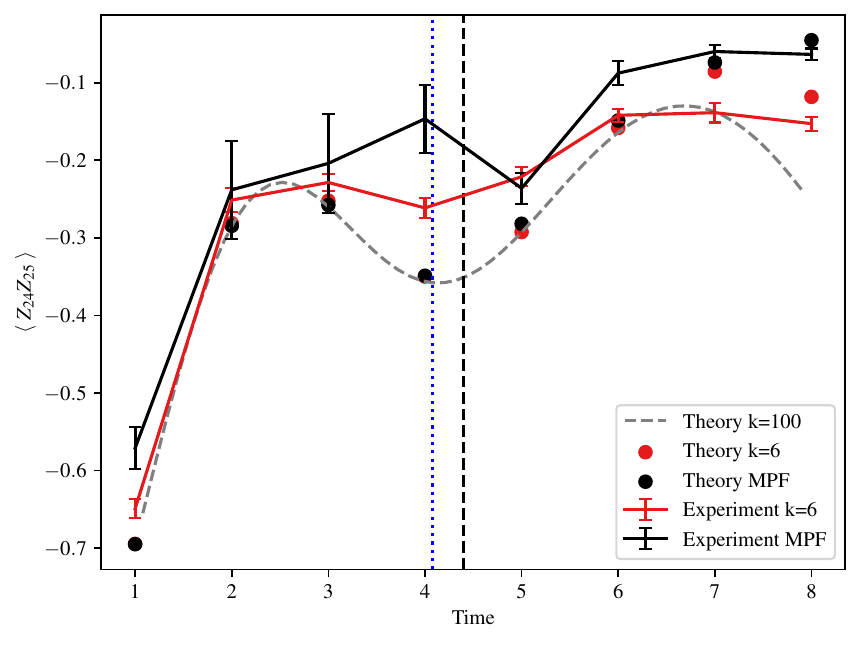}
\caption{The expectation values of the observables $\sigma^z_{25}$ and $\sigma^z_{24}\sigma^z_{25}$ after time evolution with the Hamiltonian in equation (\ref{h_general}) as measured on $\texttt{ibm\char`_kyiv}$. Both MPF and $k=6$ results are shown after incorporating probabilistic error amplification. Dynamic MPF and $k=6$ results both give comparable results even though $k=6$ circuit is considerly deeper than the circuits used for the MPF.} 
\label{fig:kyiv_expec}
\end{figure}

Before any error mitigation and noise model simplification, the general noise channel has the Kraus operator sum representation (KOSR)
$$\tilde \Lambda(\rho) = \sum_{\alpha, \beta} c_{\alpha, \beta} P_{\alpha} \rho P_{\beta}^{\dagger} $$
where $P_{\alpha} \in \mathcal{P}_{n}$ the n-qubit Pauli group. 

Performing Pauli twirling over the Pauli group, which can be easily implemented as single qubit gates results in
$$ \Lambda(\rho) = \mathbb{E}_\alpha \left[P_\alpha^{\dagger} \tilde{\Lambda}\left(P_\alpha^{\dagger} \rho P_\alpha\right) P_\alpha^{\dagger}\right] = \sum_{\alpha} c_{\alpha} P_{\alpha} \rho P_{\alpha} ^{\dagger}$$
Owing to the Pauli twirling, the Pauli transfer matrix becomes diagonal, i.e., it goes from 
$T_{\tilde{\Lambda}}[a, b]=\frac{1}{2^n} \operatorname{Tr}\left[P_a^{\dagger}\left(\tilde{\Lambda}\left(P_b\right)\right)\right]$ to $T_{\Lambda}[a, b]=\delta_{a, b} T_{\tilde{\Lambda}}[a, b]$. We supplement Pauli twirling with dynamical decoupling (DD), in particular the CPMG sequence -  $f_{\tau/2} X_p f_{\tau} X_m f_{\tau/2}$ sequence. Here $f_{\tau}$ signifies free evolution for time $\tau$ and $X_p$, $X_m$ are $\pm\pi/2$ rotation about the $X$-axis. The CPMG sequence cancels $Z$-terms in the interaction Hamiltonian and more pertinently, when applied on alternating qubits, cancels the $ZZ$-crosstalk prevalent in superconducting qubits. DD combined with Pauli twirling suppresses coherent, crosstalk and non-Markovian 
errors. In turn, this allows us to model the device noise as a sparse Pauli-Lindblad model of the form

$$ \Lambda(\rho) = \prod_{k \in \mathcal{K}} \left( \frac{e^{-2 \lambda_{k}}+1}{2} \rho + \frac{e^{-2 \lambda_{k}}-1}{2} P_{k} \rho P_{k}^{\dagger}\right).$$
Here $\mathcal{K}$ correspond to single and two-qubit Pauli terms on the qubits on which gates are being actively applied~\cite{kim2023evidence}. 

The first step in PEA is to learn the model coefficients $\lambda_k$.  Given the circuit for Trotter evolution, we identify unique gate layers and then perform benchmark circuit for these layers at different depths to empirically determine $\lambda_k$s. Equipped with the model coefficients, we can then execute the circuit at various noise scales $\alpha$. Multiplying the noise coefficients $\lambda_k$ by $\alpha$ gives a Pauli map with the noise scaled by $\alpha$, which in turn allows us to use zero noise extrapolation to find the ideal values of the desired observable. It is necessary to perform randomized twirling and dynamical decoupling through both the learning and the amplification steps of PEA, as these error suppression steps enforce that noise model can be approximated using a Pauli-Lindblad model.

The results in in Figure \ref{fig:kyiv_expec} from \texttt{ibm\char`_kyiv} were acquired by combining MPF with probabilistic error amplification (PEA). As MPF focuses solely on addressing mitigating Trotter errors, any decoherence that causes the underlying product formula to deviate from the noiseless expectation values propagates as errors in the MPF results. Therefore, in theory, MPF is both complementary to and aided by error suppression and error mitigation methods which must be implemented before the product formulas are used combined.

\section{Device Specifications}\label{app:device_spec} 

\paragraph*{Device overview.}
The experimental results reported in Sec. \ref{sec:results} were obtained on two different devices, with different architectures. $\texttt{ibm\char`_kyiv}$ is a so-called Eagle type quantum processor, consisting of fixed-frequency transmon qubits with capacitive coupling between neighboring qubits arranged in a heavy-hexagonal lattice of 127 qubits. $\texttt{ibm\char`_torino}$ is a Heron type device, which consists of fixed frequency transmon qubits with flux tunable-coupling arranged in a heavy-hexagonal lattice of 133 qubits. The latter Heron type device are a newer generation devices compared to the Eagle ones, where the cross-talk effects due to the always-on capacitive coupling between neighboring qubits has been greatly reduced by allowing this coupling to be tunable and effectively ensuring no-coupling between neighbors unless an entangling interaction between them is applied. Moreover, the tunable coupling allows for stronger (thus faster) interaction between qubits, reducing the gate time of two-qubit operations in this architecture to be the same (as order of magnitude) as single qubit operations. This often results in the need for less complicated error suppression/error mitigation techniques for the experiments involving $\texttt{ibm\char`_torino}$, as one of the main contribution to the noise in Eagle type devices comes from the cross-talk terms arising in the presence of always-on coupling between qubits during entangling gates. 

\paragraph*{Device properties.}
The IBM Quantum Eagle processor's error properties are reported in Fig.~\ref{fig:kyiv_t1t2} while the Heron processor's error properties are shown in Fig.~\ref{fig:torino_t1t2}. For both figures, the left panel shows the single-qubit gate error, characterized by the randomized benchmarking technique, the two-qubit gate error, also assessed via randomized benchmarking,  and the readout error, representing the readout assignment infidelity. The right panel focuses on coherence times, the $T_1$ relaxation time, the period a qubit takes to relax to its ground state, and the $T_2$ dephasing time, measuring the time over which a qubit maintains its quantum state coherence.

\begin{figure}
\centering
\includegraphics[width=\columnwidth]{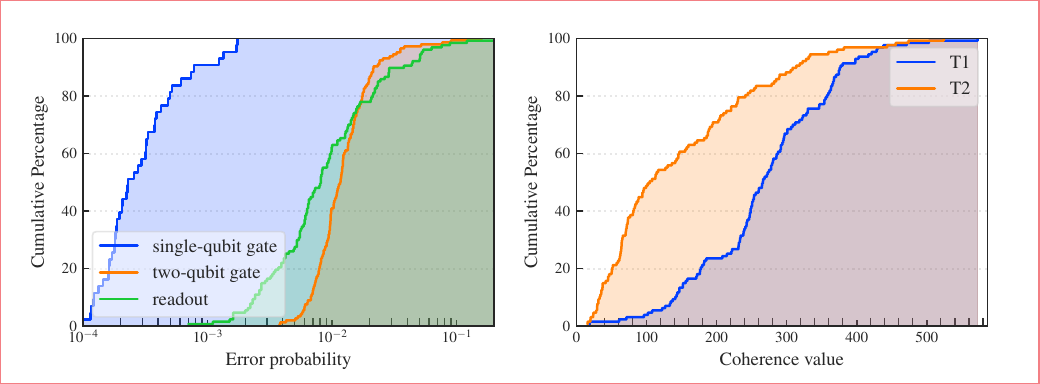}
\caption{Measured properties of $\texttt{ibm\char`_kyiv}$ device. Left panel: single-qubit gate error $.04\%(.023\%)$ mean(median), two-qubit gate error $1.5\%(1.2\%)$ mean(median), readout error $1.6\%(.81\%)$ mean(median). Right panel: T1 $270(270) \mu s$ mean(median) and T2 $150(100) \mu s$ mean(median).}\label{fig:kyiv_t1t2}
\end{figure}

\begin{figure}
\centering
\includegraphics[width=\columnwidth]{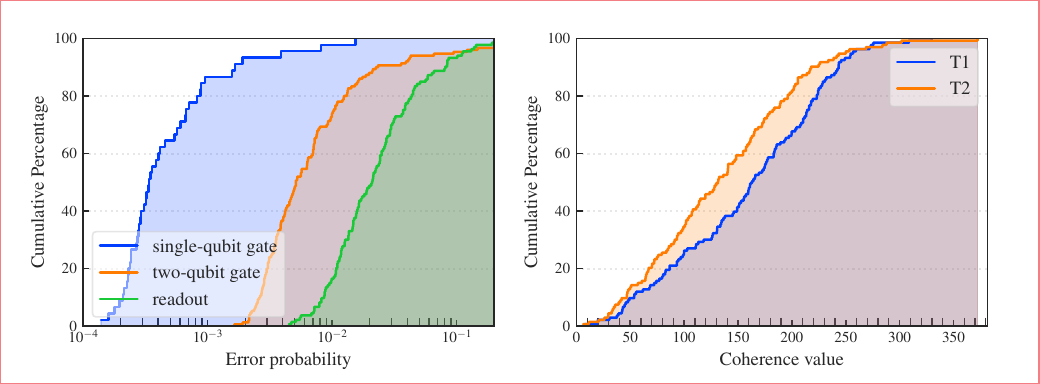}
\caption{Measured properties of $\texttt{ibm\char`_torino}$ device. Left panel: single-qubit gate error $.11\%(.034\%)$ mean(median), two-qubit gate error $4.3\%(0.51\%)$ mean(median), readout error $3.5\%(2.1\%)$ mean(median). Right panel: T1 $160(160) \mu s$ mean(median) and T2 $130(130) \mu s$ mean(median).}
\label{fig:torino_t1t2}
\end{figure}

\paragraph*{Qubit used in the experiments.}
The qubit subsets chosen on the different devices are presented in Figs.~\ref{fig:kyiv_connectivity} and \ref{fig:torino_connectivity} for $\texttt{ibm\char`_kyiv}$ and $\texttt{ibm\char`_torino}$, respectively. The susbsets were chosen by leveraging the native capabilities of Qiskit's transpiler. This uses the $\text{VF2}$ graph isomorphism algorithm to find subsets of device qubits whose interaction graph is isomorphic to the one of the routed (i.e. where SWAP gates are added to account for device connectivity) circuit. Isomorphic layouts are then scored using a heuristic map \cite{nation2023suppressing} based on the error rates reported for the devices. The layout with the best (lowest) score is selected as the layout for the experiment.

\begin{figure*}
\centering
\includegraphics[width=\textwidth]{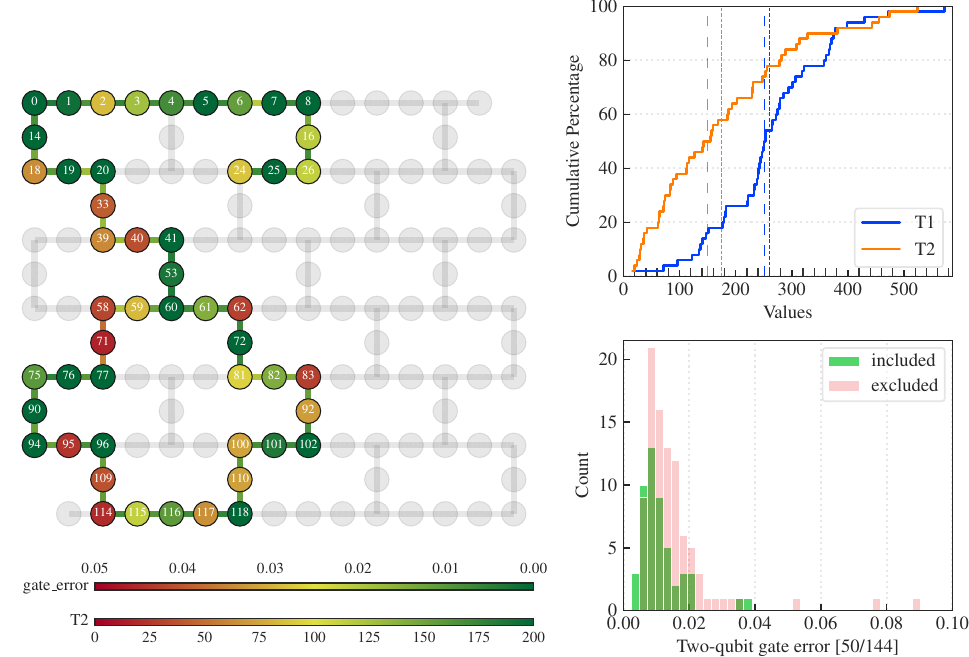}
\caption{Qubit subest used in the experiment on $\texttt{ibm\char`_kyiv}$. Left panel: Device topology with used qubits/edges highlighted according to their measured error rates ($T_2$ and two-qubit gate error). Right panel: (top) cumulative distribution of coherence times $T_1$ and $T_2$ with highlighted mean(median) values as dotted(dashed) lines, (bottom) histogram of two-qubit error rates for edges included/excluded from the chosen subset.}\label{fig:kyiv_connectivity}
\end{figure*}

\begin{figure*}
\centering
\includegraphics[width=\textwidth]{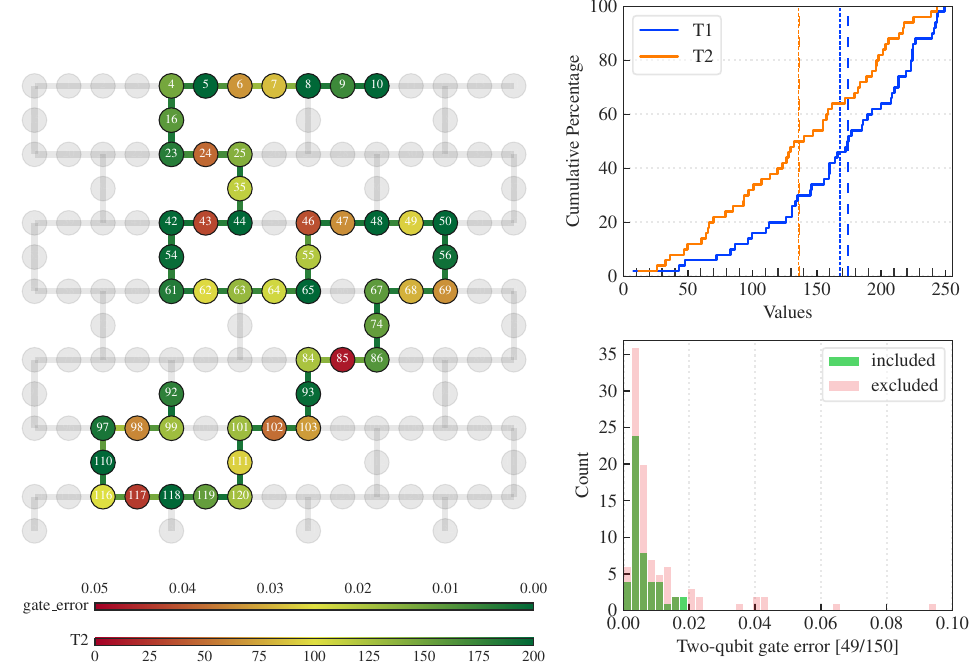}
\caption{Qubit subest used in the experiment on $\texttt{ibm\char`_torino}$. Left panel: Device topology with used qubits/edges highlighted according to their measured error rates ($T_2$ and two-qubit gate error). Right panel: (top) cumulative distribution of coherence times $T_1$ and $T_2$ with highlighted mean(median) values as dotted(dashed) lines, (bottom) histogram of two-qubit error rates for edges included/excluded from the chosen subset.}\label{fig:torino_connectivity}
\end{figure*}

\clearpage

\bibliographystyle{unsrt}
\bibliography{references}

\begin{thebibliography}{10}

\bibitem{harle2022observing}
Nikhil Harle, Oles Shtanko, and Ramis Movassagh.
\newblock Observing and braiding topological {M}ajorana modes on programmable
  quantum simulators.
\newblock {\em arXiv preprint arXiv:2203.15083}, 2022.

\bibitem{knolle2014dynamics}
Johannes Knolle, DL~Kovrizhin, JT~Chalker, and Roderich Moessner.
\newblock Dynamics of a two-dimensional quantum spin liquid: signatures of
  emergent majorana fermions and fluxes.
\newblock {\em Physical Review Letters}, 112(20):207203, 2014.

\bibitem{maldacena2016bound}
Juan Maldacena, Stephen~H Shenker, and Douglas Stanford.
\newblock A bound on chaos.
\newblock {\em Journal of High Energy Physics}, 2016(8):1--17, 2016.

\bibitem{larkin1969quasiclassical}
Anatoly~I Larkin and Yu~N Ovchinnikov.
\newblock Quasiclassical method in the theory of superconductivity.
\newblock {\em Sov Phys JETP}, 28(6):1200--1205, 1969.

\bibitem{shenker2014multiple}
Stephen~H Shenker and Douglas Stanford.
\newblock Multiple shocks.
\newblock {\em Journal of High Energy Physics}, 2014(12):1--20, 2014.

\bibitem{Kitaev2014talk}
Alexei Kitaev.
\newblock Hidden correlations in the hawking radiation and thermal noise.
\newblock Talk at Breakthrough Prize Symposium, 2014.

\bibitem{calabrese2005evolution}
Pasquale Calabrese and John Cardy.
\newblock Evolution of entanglement entropy in one-dimensional systems.
\newblock {\em Journal of Statistical Mechanics: Theory and Experiment},
  2005(04):P04010, 2005.

\bibitem{lloyd1996universal}
Seth Lloyd.
\newblock Universal quantum simulators.
\newblock {\em Science}, 273(5278):1073--1078, 1996.

\bibitem{childs2021theory}
Andrew~M Childs, Yuan Su, Minh~C Tran, Nathan Wiebe, and Shuchen Zhu.
\newblock Theory of {T}rotter error with commutator scaling.
\newblock {\em Physical Review X}, 11(1):011020, 2021.

\bibitem{haah2021quantum}
Jeongwan Haah, Matthew~B Hastings, Robin Kothari, and Guang~Hao Low.
\newblock Quantum algorithm for simulating real time evolution of lattice
  hamiltonians.
\newblock {\em SIAM Journal on Computing}, 52(6):FOCS18--250, 2021.

\bibitem{childs2018toward}
Andrew~M Childs, Dmitri Maslov, Yunseong Nam, Neil~J Ross, and Yuan Su.
\newblock Toward the first quantum simulation with quantum speedup.
\newblock {\em Proceedings of the National Academy of Sciences},
  115(38):9456--9461, 2018.

\bibitem{low2017optimal}
Guang~Hao Low and Isaac~L Chuang.
\newblock Optimal {H}amiltonian simulation by quantum signal processing.
\newblock {\em Physical review letters}, 118(1):010501, 2017.

\bibitem{childs2019faster}
Andrew~M Childs, Aaron Ostrander, and Yuan Su.
\newblock Faster quantum simulation by randomization.
\newblock {\em Quantum}, 3:182, 2019.

\bibitem{bravyi2022future}
Sergey Bravyi, Oliver Dial, Jay~M Gambetta, Dar{\'\i}o Gil, and Zaira Nazario.
\newblock The future of quantum computing with superconducting qubits.
\newblock {\em Journal of Applied Physics}, 132(16), 2022.

\bibitem{vidal2003efficient}
Guifr{\'e} Vidal.
\newblock Efficient classical simulation of slightly entangled quantum
  computations.
\newblock {\em Physical review letters}, 91(14):147902, 2003.

\bibitem{childs2012hamiltonian}
Andrew~M Childs and Nathan Wiebe.
\newblock Hamiltonian simulation using linear combinations of unitary
  operations.
\newblock {\em arXiv preprint arXiv:1202.5822}, 2012.

\bibitem{low2019well}
Guang~Hao Low, Vadym Kliuchnikov, and Nathan Wiebe.
\newblock Well-conditioned multiproduct {H}amiltonian simulation.
\newblock {\em arXiv preprint arXiv:1907.11679}, 2019.

\bibitem{vazquez2023well}
Almudena~Carrera Vazquez, Daniel~J Egger, David Ochsner, and Stefan Woerner.
\newblock Well-conditioned multi-product formulas for hardware-friendly
  hamiltonian simulation.
\newblock {\em Quantum}, 7:1067, 2023.

\bibitem{rendon2022improved}
Gumaro Rendon, Jacob Watkins, and Nathan Wiebe.
\newblock Improved error scaling for trotter simulations through extrapolation.
\newblock {\em arXiv preprint arXiv:2212.14144}, 2022.

\bibitem{zhuk2023trotter}
Sergiy Zhuk, Niall Robertson, and Sergey Bravyi.
\newblock Trotter error bounds and dynamic multi-product formulas for
  hamiltonian simulation.
\newblock {\em arXiv preprint arXiv:2306.12569}, 2023.

\bibitem{schollwock2011density}
Ulrich Schollw{\"o}ck.
\newblock The density-matrix renormalization group in the age of matrix product
  states.
\newblock {\em Annals of physics}, 326(1):96--192, 2011.

\bibitem{robertson2023approximate}
Niall~F Robertson, Albert Akhriev, Jiri Vala, and Sergiy Zhuk.
\newblock Approximate quantum compiling for quantum simulation: A tensor
  network based approach.
\newblock {\em arXiv preprint arXiv:2301.08609}, 2023.

\bibitem{yuan2021quantum}
Xiao Yuan, Jinzhao Sun, Junyu Liu, Qi~Zhao, and You Zhou.
\newblock Quantum simulation with hybrid tensor networks.
\newblock {\em Physical Review Letters}, 127(4):040501, 2021.

\bibitem{haghshenas2022variational}
Reza Haghshenas, Johnnie Gray, Andrew~C Potter, and Garnet Kin-Lic Chan.
\newblock Variational power of quantum circuit tensor networks.
\newblock {\em Physical Review X}, 12(1):011047, 2022.

\bibitem{eddins2022doubling}
Andrew Eddins, Mario Motta, Tanvi~P Gujarati, Sergey Bravyi, Antonio Mezzacapo,
  Charles Hadfield, and Sarah Sheldon.
\newblock Doubling the size of quantum simulators by entanglement forging.
\newblock {\em PRX Quantum}, 3(1):010309, 2022.

\bibitem{schuhmacher2024hybrid}
Julian Schuhmacher, Marco Ballarin, Alberto Baiardi, Giuseppe Magnifico,
  Francesco Tacchino, Simone Montangero, and Ivano Tavernelli.
\newblock Hybrid tree tensor networks for quantum simulation.
\newblock {\em arXiv preprint arXiv:2404.05784}, 2024.

\bibitem{filippov2023scalable}
Sergei Filippov, Matea Leahy, Matteo~AC Rossi, and Guillermo
  Garc{\'\i}a-P{\'e}rez.
\newblock Scalable tensor-network error mitigation for near-term quantum
  computing.
\newblock {\em arXiv preprint arXiv:2307.11740}, 2023.

\bibitem{tenpy}
Johannes Hauschild and Frank Pollmann.
\newblock {Efficient numerical simulations with Tensor Networks: Tensor Network
  Python (TeNPy)}.
\newblock {\em SciPost Phys. Lect. Notes}, page~5, 2018.
\newblock Code available from \url{https://github.com/tenpy/tenpy}.

\bibitem{pokharel2018demonstration}
Bibek Pokharel, Namit Anand, Benjamin Fortman, and Daniel~A Lidar.
\newblock Demonstration of fidelity improvement using dynamical decoupling with
  superconducting qubits.
\newblock {\em Physical review letters}, 121(22):220502, 2018.

\bibitem{ezzell2023dynamical}
Nic Ezzell, Bibek Pokharel, Lina Tewala, Gregory Quiroz, and Daniel~A Lidar.
\newblock Dynamical decoupling for superconducting qubits: a performance
  survey.
\newblock {\em Physical Review Applied}, 20(6):064027, 2023.

\bibitem{van2022model}
Ewout Van Den~Berg, Zlatko~K Minev, and Kristan Temme.
\newblock Model-free readout-error mitigation for quantum expectation values.
\newblock {\em Physical Review A}, 105(3):032620, 2022.

\bibitem{kim2023evidence}
Youngseok Kim, Andrew Eddins, Sajant Anand, Ken~Xuan Wei, Ewout Van Den~Berg,
  Sami Rosenblatt, Hasan Nayfeh, Yantao Wu, Michael Zaletel, Kristan Temme,
  et~al.
\newblock Evidence for the utility of quantum computing before fault tolerance.
\newblock {\em Nature}, 618(7965):500--505, 2023.

\bibitem{nation2023suppressing}
Paul~D Nation and Matthew Treinish.
\newblock Suppressing quantum circuit errors due to system variability.
\newblock {\em PRX Quantum}, 4(1):010327, 2023.

\end{thebibliography}

\end{document}